\newtheorem{theorem}{Theorem}
\newenvironment{proof}{ \textbf{Proof:} }{ \hfill $\Box$}
\newcommand{\figref}[1]{{Fig.}~\ref{#1}}
\def\bb0{{\mathbb{0}}}
\def\ba{{\mathbf{a}}}
\def\bb{{\mathbf{b}}}
\def\bff{{\mathbf{f}}}
\def\bn{{\mathbf{n}}}
\def\bw{{\mathbf{w}}}
\def\b0{{\mathbf{0}}}
\def\bH{{\mathbf{H}}}
\def\bI{{\mathbf{I}}}
\def\bbE{{\mathbb{E}}}
\def\cR{\mathcal{R}}
\def\sf0{{\mathsf{0}}}
\renewcommand{\maketag@@@}[1]{\hbox{\m@th\normalsize\normalfont#1}}%
\newcommand{\sref}[1]{{Section}~\ref{#1}}
\def \Nbs {N_\mathrm{BS}}
\def \Nms {N_\mathrm{MS}}
\def \BS {\mathrm{BS}}
\def \MS {\mathrm{MS}}
\begin{document}
\title{Initial Beam Association in Millimeter Wave Cellular Systems: Analysis and Design Insights}
\author{{\fontsize{13pt}{13pt}\selectfont Ahmed Alkhateeb, Young-Han Nam, Md Saifur Rahman, Jianzhong (Charlie) Zhang, and Robert W. Heath Jr.}\thanks{Ahmed Alkhateeb and Robert W. Heath Jr. are with The University of Texas at Austin (Email: aalkhateeb, rheath@utexas.edu). Young-Han Nam, Md Saifur Rahman, Jianzhong (Charlie) Zhang are with Samsung Research America-Dallas (Email: younghan.n, md.rahman, jianzhong.z@samsung.com).} \thanks{This work was done in part while the first author was with Samsung Research America-Dallas. The authors at The University of Texas at Austin are supported in part by the National Science Foundation under Grant No. 1319556. }}
\maketitle

\begin{abstract}

Enabling the high data rates of millimeter wave (mmWave) cellular systems requires deploying large antenna arrays at both the basestations and mobile users. The beamforming weights of these large arrays need to be tuned to guarantee sufficient beamforming gains. Prior work on coverage and rate of mmWave cellular networks focused mainly on the case when base stations and mobile users beamfomring vectors are perfectly designed for maximum beamforming gains. Designing beamforming/combining vectors, though, requires training which may impact both the SINR coverage and rate of mmWave cellular systems. This paper characterizes and evaluates the performance of mmWave cellular networks while accounting for the beam training/association overhead. First, a model for the initial beam association is developed based on beam sweeping and downlink control pilot reuse. To incorporate the impact of beam training into system performance, a new metric, called the effective reliable rate, is defined and adopted. Using stochastic geometry, the effective reliable rate of mmWave cellular networks is derived for two special cases: with near-orthogonal control pilots and with full pilot reuse.  Analytical and simulation results provide insights into the answers of three important questions: (i) What is the impact of beam association on mmWave network performance? (ii) Should orthogonal or reused control pilots be employed in the initial beam association phase? (iii) Should exhaustive or hierarchical search be adopted for the beam training phase?  The results show that unless the employed beams are very wide or the system coherence block length is very small, exhaustive search with full pilot reuse is nearly as good as perfect beam alignment.

\end{abstract}

\section{Introduction} \label{sec:intro}

Wireless communication via millimeter wave (mmWave) frequencies is a key component of future cellular systems \cite{Rappaport2014,Pi2011,Boccardi2014,Rangan2014,Bai2014}. mmWave deployments will use beamforming with large antenna arrays by both the base stations (BSs) and mobile stations (MSs) to ensure sufficient received signal power \cite{Rappaport2013a,Pi2011,Roh2014}. These antenna arrays, though, need to be adaptively steered achieve good beamforming gains. One approach is beam training which aims at selecting the best beamforming/combining vectors from a pre-defined codebook without explicit estimation of the channel \cite{Wang2009,Alkhateeb2014,Hur2013}. The success of beam training is affected by out-of-cell interference, which may lead to misalignment errors and reduce the overall network performance. Therefore, it is important to evaluate the reliable performance of mmWave cellular systems with beam training and association overhead.

\subsection{Prior Work}
Evaluating the coverage and rate performance of mmWave cellular networks has been considered in  \cite{Akdeniz2014,Bai2015,Singh2015,Kulkarni2014,DiRenzo2015}. In \cite{Akdeniz2014}, a statistical channel model was developed to incorporate key findings from  mmWave channel measurements \cite{Rappaport2013a}. This model was then used to assess the performance of mmWave micro and pico-cellular systems via simulations. For more analytical insights, \cite{Bai2015,Singh2015,Kulkarni2014} leveraged stochastic geometry tools to derive the rate and SINR coverage of mmWave cellular networks, incorporating key system features like building blockage and directional antenna arrays. The work in \cite{Bai2015,Singh2015,Kulkarni2014}, however, assumed that the BSs and MSs have perfect beam alignment. Extending these results, \cite{DiRenzo2015,Thornburg2015a} evaluated the performance of  mmWave cellular and ad hoc networks assuming that the transmitter and receiver beams are correctly aligned but with a small random pointing error around the optimal direction. The beam orientation error model adopted in \cite{DiRenzo2015,Thornburg2015a} can be used, for example, to model the pointing errors due to wind \cite{Hur2013}, but may not be suitable for modeling the misalignment errors due to codebook-based initial beam training. The reason is that an error in beam training may cause the BS/MS beams to point in a completely different direction, and not to just have a small error around the optimal direction. Further, the distribution of the random pointing error in \cite{DiRenzo2015,Thornburg2015a} was not a function of the beam training and the out-of-cell interference.
  
Beam training and misalignment errors were investigated in \cite{Wang2009,Alkhateeb2014,Hur2013,BaratiNt.2015,Desai2014}, at the link level with no out-of-cell interference. In the simplest form of beam training, both the BS and MS have predefined codebooks of possible beamforming/combining vectors, and perform beam sweeping  (exhaustive search) by examining all the BS/MS beam pairs. The beam pair that maximizes some performance metric, e.g., the overall beamforming gain, is then selected to be used for data transmission. An exhaustive search is accompanied by high training overhead, especially when narrow beams are employed. To reduce the training overhead, \cite{Wang2009,Alkhateeb2014,Hur2013} designed special codebooks to support hierarchical search. In hierarchical search, the BS and MS first search for the best beam pair over a codebook of wide beam patterns. Then, a second stage of beam refinement is followed where codebooks of narrower beams are used. Mis-alignment error probabilities under different mmWave channel models were derived in \cite{Hur2013,Alkhateeb2014}. Further enhancements on the initial beam training design for mmWave systems were proposed \cite{BaratiNt.2015,Desai2014}, considering more sophisticated system and channel simulations. The work in  \cite{Wang2009,Alkhateeb2014,Hur2013,BaratiNt.2015,Desai2014} did not consider the impact of interference on the beam training process, and how this will affect mmWave cellular network performance. 
\subsection{Contribution}
In this paper, we evaluate the performance of mmWave cellular networks incorporating the impact of initial beam training/ association, and draw some insights into the design of mmWave cellular systems. The contributions of this paper are summarized as follows.
\begin{itemize}
	\item{Developing a  model for analyzing mmWave cellular systems with initial beam training/ association overhead. The model divides the system operation into two phases. In the initial beam association phase, beam sweeping with downlink control pilot reuse is applied by both the BS and MS to find their data transmission beams. In the data transmission phase, the resulting beam pair from the initial beam association phase is employed to send the data from the BS to the mobile user.}
	\item{Analyzing the performance of mmWave cellular networks in terms of a new performance metric called effective reliable rate. This metric captures the resources overhead and SINR penalty of the beam training/association process and accounts for the constraint on the minimum SINR for reliable data transmission. The effective reliable rate is derived for two special cases: full control pilot reuse and near-orthogonal pilots.}
	\item{Leveraging the analysis and simulations results to provide insights into the answers of key questions. (i) What is the impact of beam association on the reliable mmWave network performance? (ii) Should orthogonal or reused control pilots be employed in the initial beam association phase? (iii) Should exhaustive or hierarchical search be adopted for the beam training in mmWave cellular networks? }
\end{itemize}
The results show that mmWave cellular networks can perform nearly as well as with perfect beam alignment, if the initial beam association phase is properly designed. For the control pilot reuse, the results suggest that unless the  beams are very wide, full pilot reuse can provide better effective rates. Simulation results indicate that hierarchical search can have an advantage over exhaustive search only when the system coherence block length is very small. Several other analysis and design insights are also drawn for beam  association in mmWave cellular networks. 


\section{System Model} \label{sec:NetModel}

In this section, we introduce the network, channel, and antenna models considered for evaluating the mmWave network performance in this paper. The parameters defined in this section are summarized in Table \ref{tab:FB}.

\subsection{Network Model}
\textbf{Spatial locations:}
We consider a large-scale mmWave cellular network where the BSs are assumed to be distributed uniformly in $\mathbb{R}^2$ forming a homogeneous Poisson point process (PPP) $\Phi$ with density $\lambda$. In general, BSs can be located inside or outside the buildings. In this paper, we focus on the performance of mmWave cellular networks with only outdoor BSs. Assuming that the indoor and outdoor BSs form two independent PPP's and invoking the thinning theorem \cite{Haenggi2012}, $\Phi\left(\lambda\right)$ can be considered as the PPP of the outdoor BSs \cite{Bai2015}. The users are also assumed to be outdoor and distributed as a stationary point process independently from the BSs. Following the standard stochastic geometry analysis approach in \cite{Andrews2011,Baccelli2009}, a typical user is assumed to be located at the origin, and performance metrics like the downlink SINR and rate distributions are calculated at this user. By the stationarity and independence of the user process, these performance metrics are the same as the network average ones \cite{Baccelli2009}.

\textbf{Blockage model and LOS/NLOS links:}
Recent measurements show that mmWave signals are sensitive to blockage, leading to the need for explicit models for line-of-sight (LOS) and non-line-of-sight (NLOS) path-loss characteristics \cite{Rappaport2013a,MacCartney2013}. We assume the buildings act as propagation blockages for the considered outdoor-outdoor communication between users and BSs. Based on that, the link between the typical user and each BS is considered either LOS or NLOS depending on whether or not one (or more) building blockage intersects the direct link between them. To incorporate that LOS/NLOS difference into our system model, we adopt the blockage model in \cite{Bai2014b,Bai2015}, where the link between the typical user and a BS at distance $r$ is determined to be LOS or NLOS according to a Bernoulli random variable with a LOS probability $p(r)$. In this paper, we consider the LOS probability function $p(r)=e^{-r/\mu}$, where $\mu$ is the LOS range constant that depends on the geometry and density of the buildings \cite{Bai2014b,Bai2015}. 

\textbf{Cell association:}
The typical user is assumed to be associated with the BS that has the smallest path-loss. The cell association is done on a larger time scale compared with the beam association. This means that each user knows the ID of the serving BS in advance of the beam association process. This assumption is also motivated by the recent results in \cite{BaratiNt.2015}, which shows that mmWave cell association can be done efficiently and with low-complexity using omni or quasi-omni beams. Cell association can also be done via lower frequency signaling \cite{Ishii2012} 
\subsection{Channel Model} \label{subsec:Channel_Model}

\begin{table}[t!]
	\caption{System Model Parameters}
	\begin{center}
		\begin{tabular}{ | c || c | }
			\hline
			\textbf{Notation} & \textbf{Description} \\ \hline
			$\Phi\left(\lambda\right)$ & \text{BSs PPP with density $\lambda$} \\ \hline
			$\mu$ & \text{LOS range constant} \\ \hline
			$\rho_\ell$ & \text{distance-dependent path-loss of the $\ell$th BS} \\ \hline
			$\alpha_\mathrm{L}, \alpha_\mathrm{N}$ & \text{path-loss exponent of LOS and NLOS links} \\ \hline
			$C_\mathrm{L}, C_\mathrm{N}$ & \text{intercepts of LOS and NLOS path-loss formulas} \\ \hline
			$\beta_\ell$ & \text{Channel complex coefficient of the $\ell$th BS} \\ \hline			
			$h_\ell$ & \text{Channel power gain of the $\ell$th BS} \\ \hline 
			$\theta_\ell$, $\phi_\ell$ & \text{Channel's angles of arrival and departure} \\ \hline
			$G_\mathrm{BS}$, $G_\mathrm{MS}$ & \text{Main lobe directivity gains of the BS and MS} \\ \hline
			$g_\mathrm{BS}$, $g_\mathrm{MS}$ & \text{Side lobe directivity gains of the BS and MS} \\ \hline
			$\Theta_\mathrm{BS}$, $\Theta_\mathrm{MS}$ & \text{Main lobe half-power beamwidth of the BS and MS} \\ \hline
		\end{tabular}
	\end{center}
	\label{tab:FB}
\end{table}

Let the numbers of BS and MS antennas be $\Nbs$ and $\Nms$, and let the $\Nms \times \Nbs$ matrix $\bH_\ell$ denote the downlink channel from the $\ell$th BS to the typical user. Measurements of outdoor mmWave channels showed that they normally have a limited number of scatterers \cite{Rappaport2013a,Akdeniz2014}. To facilitate analytical tractability, we adopt a geometric channel model similar to \cite{ElAyach2014,Alkhateeb2014},  with a single path between the typical user and each BS. This path can represent a typical LOS channel or an approximation of NLOS with a single scattering cluster. We will show by simulations in \sref{subsec:MP} that the key insights given by the analysis with single-path channels are similar to the case when a few paths exist. Note that the single-path assumption was implicitly adopted by most prior work that use stochastic geometry to study mmWave network performance \cite{Bai2015,Singh2015}. Given this single-path model, the channel $\bH_\ell$ can be written as
\begin{equation}
\bH_\ell = \sqrt{\rho_\ell} \ \beta_\ell \ \ba_\MS\left(\theta_\ell\right) \ba^*_\BS\left(\phi_\ell\right),
\end{equation}
where $\rho_\ell$ represents the distance-dependent path-loss between the typical user and the $\ell$th BS. Let $\alpha_\mathrm{L}$ and $\alpha_\mathrm{N}$ be the path-loss exponents for the LOS and NLOS links, and let $r_\ell$ denote the distance between the typical user and the $\ell$th BS, then $\rho_\ell$ can be defined as $\rho_\ell=C_\mathrm{L} r_\ell^{-\alpha_\mathrm{L}}$ for LOS links and $\rho_\ell=C_\mathrm{N} r_\ell^{-\alpha_\mathrm{N}}$ for the NLOS case, where $C_\mathrm{L}$ and $C_\mathrm{N}$ are the intercepts of the path-loss formulas \cite{Rappaport2013a,MacCartney2013}. The small-scale fading is captured by the complex coefficient $\beta_\ell$. We assume independent Nakagami fading for each link, with different parameters $N_\mathrm{L}$ and $N_\mathrm{N}$ for the LOS and NLOS links. Therefore, $h_\ell=\left|\beta_\ell\right|^2$ is a normalized Gamma random variable. The angles $\theta_\ell$ and $\phi_\ell$ denotes the angles of arrival and departure (AoA/AoD) at the user and the BS. We assume $\theta_\ell$ and $\phi_\ell$ are uniformly distributed in $\left[0, 2 \pi\right]$. Finally, $\ba_\MS\left(\theta_\ell\right)$ and $\ba_\BS\left(\phi_\ell\right)$ are the array response vectors at the MS and BS, respectively, for the angles $\theta_\ell$ and $\phi_\ell$. For simplicity, we assumed  that BSs, MSs, and blockages are distributed in a 2D plane. Extending the proposed framework to 3D channels with azimuth and elevation angles, and with 3D beamforming is a topic for future work.   

\subsection{Antenna Model and Beamforming Gains} \label{subsec:Ant_Model}
\begin{figure}[t]
	\centering
	\subfigure[center][{}]{
		\raisebox{.2\height}{\includegraphics[width=.4\columnwidth]{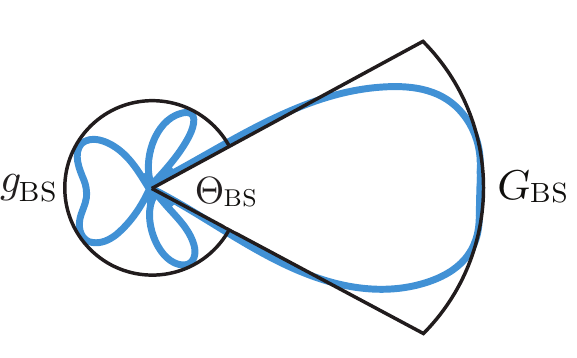}}
		\label{fig:Beam}}
	\subfigure[center][{}]{
		\includegraphics[width=.4\columnwidth]{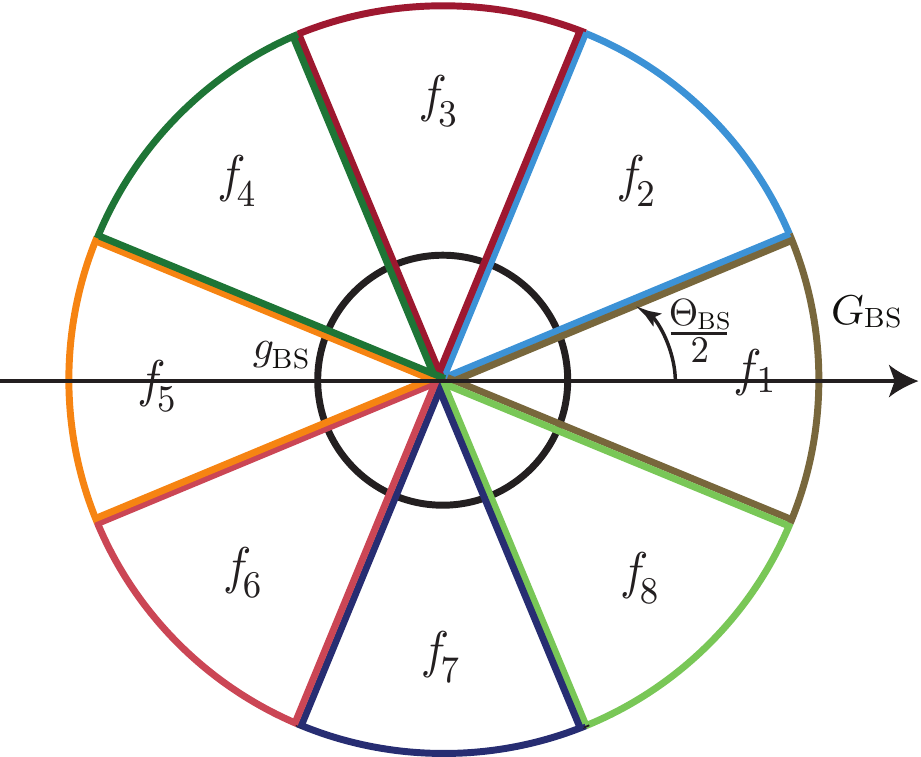}
		\label{fig:Beam_CB}}
	\caption{Figure (a) shows the adopted sectored-pattern antenna model for the BSs with a main-lobe gain $G_\BS$, side-lobe gain $g_\BS$, and main-lobe beamwidth $\Theta_\BS$. Figure (b) illustrates an example codebook of the possible beam patterns for the BS, with a codebook size $N_\BS=8$. The possible beamforming vectors are assumed to have non-overlapping main-lobes.}
\end{figure}

Both the BS and MS employ electronically-steered directional antenna arrays. For simplicity of analysis and exposition, we approximate the actual array beam pattern by a sectorized beam pattern as shown in \figref{fig:Beam}, in which $G_\BS$, $g_\BS$, and $\Theta_\BS$ represent the main-lobe power gain, side-lobe power gain, and main-lobe beamwidth of the BS beam pattern. Similarly, $G_\MS, g_\MS, \Theta_\MS$ define the MS beam pattern. 
 
\textbf{Beam patterns codebook}:
Each BS is assumed to have a codebook of $N_\mathrm{BS}$ possible beamforming vectors $\bff_1, ..., \bff_{N_\BS}$. To simplify the beam training discussion in the following sections, we assume the patterns of these beamforming vectors to have non-overlapping main-lobes designed to cover the full angular range, as shown in \figref{fig:Beam_CB} for $N_\BS=8$. Therefore, $\Theta_\BS$ can be written as $\Theta_\BS=\frac{2 \pi}{N_\BS}$. We define $\mathcal{R}_\BS^{n}$ to be the set of angles of departure covered by the main lobe of $\bff_n, n=1, 2, ..., N_\BS$, i.e., $\cR_\BS^n=\left\{\phi \in [0, 2 \pi]\left| \left|\ba^*_\BS\left(\phi\right) \bff_n \right|^2=G_\BS \right. \right\}$. Note that $\cR_\BS^n$ is invariant to  the reference direction (and the orientation of the antenna array), as long as the AoDs and the angles defining the beamforming vectors have the same reference. Therefore, we model the array orientation of each BS as a uniform random variable in $\left[0, 2\pi\right]$, and let this direction be the BS reference for its AoDs. This direction also describes the boresight direction of the first beamforming vector $\bff_1$.  Similarly, the MS is assumed to have a codebook of $N_\MS$ sectored-pattern combining vectors $\bw_1, ..., \bw_{N_\MS}$. In this paper, we will neglect the MS side lobes, i.e., set $g_\MS=0$. This assumption is important for the tractability of the beam training analysis as will be discussed in \sref{sec:Assoc_Model}. We will, however, show by simulations in \sref{subsec:MP} that the main performance trends given by the adopted sectored pattern is similar to that when both the BSs and MSs employ uniform arrays of reasonable arrays sizes. Constructing multi-resolution beam pattern codebooks with the adopted definition of $\bff_n$ and $\bw_m$ was recently investigated for mmWave systems in \cite{Alkhateeb2014,ElTayeb2015}. The work in \cite{Alkhateeb2014,ElTayeb2015} showed that hybrid analog/digital precoding architectures \cite{Zhang2005a,ElAyach2014,Alkhateeb2013}, which divide the precoding processing between analog and digital domains, can be employed to realize near-ideal sectored beam patterns. 

\textbf{Effective beamforming gain}:
If the channel from the $\ell$th BS and the typical user has an AoD $\phi_\ell$, and this BS uses a beamforming vector $\bff_n$ for signal transmission, then the effective beamforming gain $D_\BS^\ell(n)$, defined as $D_\BS^\ell(n)=\left|\ba^*_\BS\left( \phi_\ell\right) \bff_n\right|^2$, becomes  
\begin{equation}
D_\BS^\ell(n)=
\begin{cases} G_\BS, 
& \phi_\ell \in \cR_\BS^n, \\ g_\BS,
& \phi_\ell \notin \cR_\BS^n. \end{cases}
\end{equation}
The MS beamforming gain $D_\MS^\ell(m)=\left|\ba^*_\MS\left( \theta_\ell\right) \bw_m\right|^2$ can be similarly defined. 

\section{Initial Beam Training and Association Model} \label{sec:Assoc_Model}

To harvest the high beamforming gains promised by large antenna arrays deployed at the BS and MS, they both need to construct their beamforming/combining vectors. Two main approaches can be followed: (i) beam training in which the beamforming/combining vectors are iteratively designed without explicit channel knowledge \cite{Wang2009,Hur2013}, and (ii) beamforming design based on explicit channel estimation \cite{Alkhateeb2014}. Channel estimation can allow more sophisticated MIMO transmission techniques \cite{Alkhateeb2014d,HeathJr2015}. Beam training, though, seems a simpler approach for establishing the initial link after which more enhanced channel estimation algorithms can be applied to enable advanced transmission strategies. Therefore, this paper adopts a beam training model that we describe in this section.  

\begin{figure}[t]
	\centering
	\includegraphics[scale=.9]{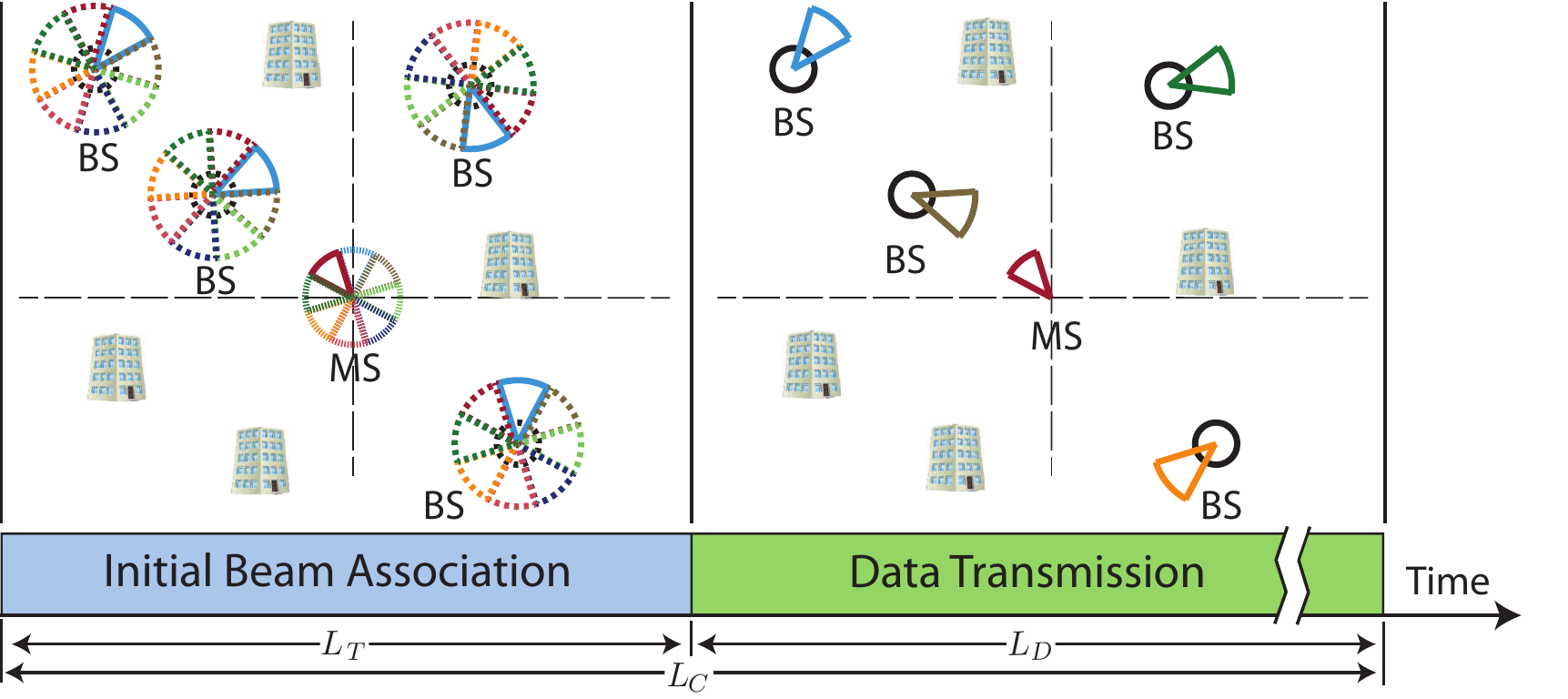}
	\caption{The figure shows that the coherence block length is divided into two phases: (i) a beam training/association phase in which each BS and MS perform beam sweeping to find the best beamforming/combining vectors, and (ii) a data transmission phase which employs the beam pair resulted from the initial beam association phase to send data from the BS to the user.}
	\label{fig:Beam_Assoc}
\end{figure}

First, we assume that the system operation within each time-frequency coherence block will be divided into two phases: an initial beam association phase and a data transmission phase, as depicted in \figref{fig:Beam_Assoc}. In the initial beam association phase, the BSs and MSs train their beamforming/combining vectors to realize the ones that will be employed for sending data during the data transmission phase. Note that systems in practice may not need to repeat the whole beam training process in each coherence block, and may make use of the prior knowledge about the beamforming vectors designed in previous coherence blocks, but we do not exploit that in our model. Hence, if $L_\mathrm{C}$ denotes the number of symbols that can be transmitted within the time-frequency coherence block, and $L_\mathrm{T}, L_\mathrm{D}$ represent the numbers of symbols for the training overhead and transmitted data, then $L_\mathrm{C}=L_\mathrm{T}+L_\mathrm{D}$. In the next subsections, we describe the key elements of the adopted beam training and association model.   
\subsection{Beam Sweeping}
In the beam association phase, we adopt a downlink training model in which each BS and MS perform an exhaustive search over their possible beamforming/combining pairs $\left(\bff_n, \bw_m\right),$ $n=1, ..., N_\BS, m=1, ..., N_\MS$, as shown in \figref{fig:Beam_Assoc}. The MS then selects the beamforming/combining pair $(\bff_{n^\star}, \bw_{m^\star})$ that maximizes a certain metric which will be discussed in \sref{subsec:Beam_Meric}. Further, we assume that the beam association phases of all BSs are aligned in time, and that BSs have synchronous transmission. The requirement of timely-aligned beam association phases can be relaxed in some cases as will be discussed in \sref{subsec:TP_Perf}. Based on this model, if the BSs use a beamforming vector $\bff_n$, and the typical user uses a combining vector $\bw_m$, then the received signal at this user can be written as 
\begin{equation}
y_{(m,n)}=\sum_{\ell: X_\ell \in \Phi(\lambda)}\bw_m^* \bH_\ell \bff_n s_\ell + \bw_m^* \bn_{(m,n)},
\end{equation}
where $\bn_{(m,n)} \sim \mathcal{CN}\left(\boldsymbol{0}, \sigma^2 \bI\right)$ is the noise vector at the receiver, and $s_\ell$ is the transmitted training symbol of the $\ell$th BS, $\bbE\left[s_\ell s_\ell^*\right]=P_T$, with $P_T$ the average transmit power. It is important to note here that as a result of the random array orientation of each BS, the $n$th beamforming vector $\bff_n$ of the different BSs correspond to different beamforming directions when compared to a common reference. Now, given the channel and antenna models in \sref{subsec:Channel_Model}-\sref{subsec:Ant_Model}, the BSs PPP $\Phi(\lambda)$ can be divided into $N_\MS$ independent PPP's $\Phi_m\left(\lambda\right), m=1, ..., N_\MS$, where $\Phi_m\left(\lambda\right)$ consists of the BSs whose AoA's lie within the main-lobe of the $m$th combining vector of the typical user. Formally, $\Phi_m\left(\lambda\right)$ can be defined as $\Phi_m\left(\lambda\right)=\left\{X_\ell \in \Phi(\lambda)\left| \right. \theta_\ell \in \cR_\MS^m \right\}$. It is worth noting here that the independence condition exists because of the single-path channel and zero MS side-lobe assumptions in the system model. While this condition is important for the analytical tractability, we show by simulations in \sref{sec:Results} that the main insights if the paper are applicable to more general settings. Applying the definition of $\Phi_m\left(\lambda\right)$ and employing the effective beamforming gain in \sref{subsec:Ant_Model}, the received signal $y_{(m,n)}$ can be expressed as 
\begin{equation}
	y_{(m,n)}=\sum_{\ell: X_\ell \in \Phi_m(\lambda)} \beta_\ell \left(P_T \rho_\ell G_\MS D_\BS^\ell(n) \right)^{\frac{1}{2}} s_\ell + \bw_m^* \bn_{(m,n)}.
	\label{eq:Rec_Sig}
\end{equation} 
The received signals of the other beamforming/combining pairs are similarly written.  Note that the summation of the first term of \eqref{eq:Rec_Sig} includes the desired signal if  the serving BS is in $\Phi_m(\lambda)$.

\subsection{Downlink Control Pilot Reuse} \label{subsec:Pilots}

In the downlink training, the MS differentiates its serving BS signal from the other BSs signals using the control pilot sequence assigned to its BS. For the MS to detect its serving BS signal free of interference, orthogonal pilot sequences need to be assigned to all interfering BSs. This, however, may require many pilots and is difficult to do in practice. Therefore, we assume that BSs will reuse the downlink control pilots. Further, for analytical tractability, we assume these pilots to be reused randomly over the BSs locations. To incorporate that in our model, we let $\delta_c \in [0, 1]$ denote the fractional control pilot reuse factor, and let each BS reuse the same pilot of the serving BS with an IID probability $\delta_c$. Then, by invoking the thinning theorem, \cite{Haenggi2012}, the BS PPP $\Phi(\lambda)$ is partitioned into two independent non-homogeneous PPP's: the PPP of control-level interfering BSs $\Phi^\mathrm{c}\left(\lambda_\mathrm{c}\right)$, with density $\lambda_c=\delta_c \lambda$, which includes the co-pilot BSs that share the same pilot with the serving BS, and the PPP $\Phi^\mathrm{o}((1-\delta_c) \lambda)$ with the other BSs which employ orthogonal pilots. Note that this PPP partitioning to co-pilot and orthogonal-pilot BSs is applied to each BS PPP $\Phi_m\left(\lambda\right), m= 1, 2, ..., N_\mathrm{MS}$, which is also divided into co-pilot PPP $\Phi_{m}^\mathrm{c}\left(\lambda_\mathrm{c}\right)$, and orthogonal-pilot PPP $\Phi_m^\mathrm{o}\left((1-\delta_\mathrm{c})\lambda\right)$. Recall that the BS PPP $\Phi_m\left(\lambda\right)$ consists of the BSs whose AoA's lie within the main-lobe of the $m$th combining vector of the typical use. Finally, if $\Pi_\mathrm{p}$ denotes the number of orthogonal pilots, then $\Pi_\mathrm{p}=\frac{1}{\delta_c}$.   

\subsection{Beam Pair Decision Criterion} \label{subsec:Beam_Meric}

We adopt the Max SNR approach as a beam-pair decision criterion. This means that the MS calculates the resultant SNR with each beamforming/combining pair, and selects the beam pair with the maximum SNR to be used  in the data transmission phase. Note that the time duration of each beam pair transmission is assumed to be enough for the estimation of the received energy at the MS. Further, the BS pilot sequences, which are repeated for each beam pair, are used to cancel the interference from the BSs with orthogonal pilots. Next, we  describe this Max SNR beam pair selection criteria. Applying the control pilot-reuse model in \sref{subsec:Pilots}, the  received signal in \eqref{eq:Rec_Sig}, with the beamforming/combining pair $\left(\bff_n, \bw_m\right)$, can be written as 
\begin{equation}  
	y_{(m,n)}=\hspace{-7pt}\sum_{\ell: X_\ell \in \Phi_m^\mathrm{c}(\lambda_\mathrm{c})} \hspace{-3pt} \beta_\ell \left(P_T \rho_\ell G_\MS D_\BS^\ell(n) \right)^{\frac{1}{2}} s_\ell + \hspace{-7pt}\sum_{\ell: X_\ell \in \Phi_m^\mathrm{o}((1-\delta_\mathrm{c})\lambda)} \hspace{-3pt} \beta_\ell \left(P_T \rho_\ell G_\MS D_\BS^\ell(n) \right)^{\frac{1}{2}} s_\ell + \bw_m^* \bn_{(m,n)}.
\end{equation}
where the first term represents the received signal from the serving BS plus the signal from the co-pilot interfering BSs, and the second term represents the received signal from the BSs with orthogonal pilots. Note that the use of orthogonal pilots during the control phase causes the resultant SNR in this beam association phase to be different than the data-transmission SNR, as the received power from the co-pilot interfering BSs will be seen as a desired signal power at the MS. Let $\mathsf{SNR}_{m,n}^\mathrm{c}$ denote the control-level SNR due to beamforming/combining pairs $\left(\bff_n, \bw_m\right)$:
\begin{equation} 
\mathsf{SNR}_{m,n}^\mathrm{c}=\frac{\sum_{\ell: X_\ell \in \Phi_m^\mathrm{c}(\lambda_\mathrm{c})} P_T h_\ell \rho_\ell G_\MS D_\BS^\ell(n)} {\sigma^2}. \label{eq:c_SINR}
\end{equation}
After sweeping over all the beamforming/combining pairs, the MS will select the best beamforming/combining pair $\left\{\bff_{n^\star}, \bw_{m^\star}\right\}$, that solves 
\begin{equation}
\left\{\bff_{n^\star}, \bw_{m^\star}\right\} = \displaystyle{\operatorname*{\arg max}_{\substack{n=1, ..., N_\BS \\ m=1, ..., N_\MS}}} \mathsf{SNR}_{m,n}^\mathrm{c}. 
\end{equation}
The obtained beamforming/combining beam patterns $\left\{\bff_{n^\star}, \bw_{m^\star}\right\}$, will then be employed by the MS and its serving BS during the data transmission phase.

\section{Coverage and Effective Rate} \label{sec:Analysis}

The objective of this section is to characterize the performance of mmWave cellular networks taking into consideration the impact of the initial beam association phase. In \sref{subsec:Metrics}, we introduce the metrics that we adopt for evaluating the mmWave network performance, before characterizing this performance in \sref{subsec:SINR_Perf}-\sref{subsec:TP_Perf}.

\subsection{Performance Metrics} \label{subsec:Metrics}
The impact of the initial beam association phase is found in two places: (i) resources overhead which represents the time-frequency resources allocated to the beam training/association phase, and (ii) SINR overhead which is the penalty on the SINR coverage due to the possible misalignment error in the beam training phase. To capture these two sources of overhead due to initial beam association, we define and adopt the following performance metrics. The adopted metrics also take into consideration the point that practical communication systems may not transmit reliably with any modulation and coding scheme (MCS) below a certain SINR threshold. 
\begin{itemize}
	\item{\textbf{SINR Coverage:}
		Here, the SINR coverage, ${P}_\mathrm{c}\left(T\right)$, is defined as the probability that the received SINR at the typical user during the data transmission phase is greater than a threshold $T > 0$, when the beamforming/combining vectors resulting from the beam association phase are employed. Note that the single-path channel model and zero MS side lobe assumptions in \sref{subsec:Channel_Model}-\sref{subsec:Ant_Model} imply that the received SINR can be greater than zero if and only if $D_\MS^0 = G_\MS$, where the index $\ell=0$ will be reserved in this paper for the serving/tagged BS. Therefore, we define the following two events. In the optimal beam pair ($\mathsf{OBP}$) event, the selected beamforming/combining pair in the beam association phase is the one that maximizes the BS and MS beamforming gains, i.e., $\mathsf{OBP}=\left\{D_\BS^0=G_\BS, D_\MS^0=G_\MS\right\}$. In the sub-optimal beam pair ($\mathsf{SBP}$) event, the MS beamforming gain is maximized, but not the BS gain, i.e., $\mathsf{SBP}=\left\{D^0_\BS=g_\BS, D^0_\MS=G_\MS\right\}$. Note that those are the only cases with non-zero SINR coverage probability. Therefore, we can expand ${P}_\mathrm{c}\left(T\right)$ as 
		\begin{equation}
		{P}_\mathrm{c}\left(T\right)=\mathbb{P}\left(\mathsf{SINR} > T \left| \mathsf{OBP} \right.\right) \mathbb{P}\left(\mathsf{OBP}\right)+\mathbb{P}\left(\mathsf{SINR} > T \left| \mathsf{SBP} \right.\right)  \mathbb{P}\left(\mathsf{SBP}\right), \label{eq:SINR_Cov}
		\end{equation}
		where $\mathsf{SINR}$ is the data  transmission SINR with the beamforming/combining vectors resulting from the beam training and association phase.
	}
	\item{\textbf{Effective Achievable Rate:}\
	To evaluate the aggregate performance of a mmWave cellular network, we propose an effective rate performance metric, which captures both the resource overhead and the SINR penalty of the beam association phase as well as accounting for the constraint on the minimum SINR. If $\eta$ represents the system resource efficiency---the percentage of the system resources allocated for data transmission---the effective achievable rate is defined as
	\begin{equation}
	R_\mathrm{eff}= \eta \ \bbE \left[  \log_2\left(1+\max\left(\mathsf{SINR}, T_\mathrm{max}\right)\right) \mathbbm{1}_{\mathsf{SINR} \geq T_\mathrm{th}}\right],\label{eq:eff_Rate}
	\end{equation}
	with the SINR thresholds $T_\mathrm{th}$ and $T_\mathrm{max}$ denoting the minimum and maximum SINRs supported by the modulation and coding scheme (MCS). The system resources efficiency, $\eta(\lambda, \Nbs, \Nms)$, is defined as
	\begin{equation} \label{eq:Eff_Def}
	\eta=\left(1-\frac{\Pi_\mathrm{p}(\lambda) \Nbs \Nms}{L_\mathrm{C}}\right)^+.
	\end{equation}
	Finally, given the SINR coverage probability $P_\mathrm{c}\left(T\right)$, the effective average rate can be expressed as \cite{Andrews2011}
	\begin{equation}
		R_\mathrm{eff}= \left(1-\frac{\Pi_\mathrm{p}(\lambda) \Nbs \Nms}{L_\mathrm{C}}\right)^+ \left[\frac{1}{\ln(2)} \int_{T_\mathrm{th}}^{T_\mathrm{max}} \frac{{P}_{\mathrm{c}}\left(y\right)}{y+1} dy +  \log_2\left(1+T_\text{th}\right) P_{c}\left(T_\text{th}\right)\right].
		\label{eq:eff_Rate_F}
	\end{equation}
	}
\end{itemize}

Analyzing the performance of beam-association based mmWave cellular systems is in general non-trivial. This is mainly due to the correlation between the control-level SNR $\left(\text{denoted }  \mathsf{SNR}^\mathrm{c}\right)$ in \eqref{eq:c_SINR} and the data transmission SINR, as they belong to the same BS PPP's and are calculated for the same channels. This makes it difficult to characterize the SINR coverage probability in \eqref{eq:SINR_Cov} for an arbitrary pilot reuse factor. Therefore, we will analyze the mmWave network performance for two important special cases in the following two subsections:  when the control pilot reuse factor is very small and when the pilot reuse factor is one. Investigating those two extreme cases helps us draw conclusions about the impact of beam association on mmWave cellular performance as well as make observations on more general settings. 

\subsection{Performance with Near-Orthogonal Pilots}\label{subsec:SINR_Perf}

As explained in \sref{subsec:Pilots}, the number of orthogonal pilots can be written as $\Pi_\mathrm{p}=\frac{1}{\delta_\mathrm{c}}$. This means that the number of orthogonal pilots can, theoretically, go to infinity when the pilot reuse factor approaches zero. In practice, however, there is a minimum value for the pilot reuse factor after which the co-pilot interference can be neglected with respect to the noise power. Therefore, we will  let  $\lambda_{\mathrm{c},\text{min}}$ denote the control-level interfering BS density at which the co-pilot interference to noise ratio is close to zero with high probability, i.e., $\mathbb{P}\left(\frac{\sum_{\ell: X_\ell \in \Phi_m^\mathrm{c}(\lambda_{\mathrm{c},\text{min}}), \ell \neq 0} P_T h_\ell \rho_\ell G_\MS D_\BS^\ell(n)}{\sigma^2} < \epsilon_1 \right) \geq 1- \epsilon_2$ for very small $\epsilon_1, \epsilon_2$. Further, we introduce the interference radius term $R_\mathrm{I}$ and define it as the circle radius at the minimum control BS density $\lambda_{\mathrm{c},\text{min}}$ that has only one co-pilot interfering BS on average, i.e., $ \pi R_\mathrm{I}^2 \lambda_{\mathrm{c},\text{min}}=1$. We can then write $\delta_{\mathrm{c},\text{min}}=\frac{1}{ \pi \lambda R_\mathrm{I}^2}$, and the number of orthogonal pilots in this case as $\Pi_\mathrm{p}(\lambda)=\pi \lambda R_\mathrm{I}^2$. It is worth noting here that $R_\mathrm{I}$ is a network constant that depends on the control BS density $\lambda_\mathrm{c}=\delta_{c} \lambda$, and not on the actual BS density $\lambda$. Next, we derive the SINR coverage and effective rates metrics for this case. 
\begin{itemize}
	\item{\textbf{SINR Coverage:}
		Given the minimum pilot reuse factor $\delta_{\mathrm{c}, \text{min}}$ with which the co-pilot interference is negligible, the control-level SNR in \eqref{eq:c_SINR} can be approximated as 
		\begin{equation} \label{eq:SNR_C_Near}
		\mathsf{SNR}_{m,n}^\mathrm{c} \approx \frac{P_T h_0 \rho_0 D_\BS^0(n) D_\MS^0(m)} {\sigma^2},
		\end{equation}     
		which implies that the beam association phase will result in the maximum BS and MS beamforming gains $G_\BS, G_\MS$ with high probability, i.e., $\mathbb{P}\left(\text{OBP}\right) \approx 1$, as the impact of the co-pilot interference is omitted in \eqref{eq:SNR_C_Near}. Note that $\mathbb{P}\left(\text{OBP}\right) \approx 1$ also means that $\mathbb{P}\left(\text{SBP}\right) \approx 0$. Therefore, the SINR coverage in \eqref{eq:SINR_Cov} can be approximated as
		\begin{equation}
		P_{\mathrm{c}}^{\text{Orth}}\left(T\right) \approx \mathbb{P}\left(\frac{P_T G_\BS G_\MS h_0  \rho_0 }{  \sum_{\ell: X_\ell \in \Phi_m (\lambda), \ell \neq 0} P_T h_\ell \rho_\ell G_\MS D_\BS^\ell(n) + \sigma^2} > T\right).
		\label{eq:Pc_SINR}
		\end{equation}
		The SINR coverage expression in \eqref{eq:Pc_SINR} is similar to that with perfect beam alignment characterized before in \cite{Bai2015} whose results can therefore be used to determine $P_{\mathrm{c}}^{\text{Orth}}\left(T\right)$.
	}
	\item{\textbf{Effective Achievable Rate:}
        Given the number of required pilots, $\Pi_\mathrm{p}(\lambda)=\pi \lambda R_\mathrm{I}^2$, the system resource efficiency in \eqref{eq:Eff_Def} becomes 
        \begin{equation}
        \eta=\left(1-\frac{2 \pi \lambda R_\mathrm{I}^2 N_\BS N_\MS}{L_\mathrm{C}}\right)^+,
        \end{equation}
        and the effective achievable rate can be expressed as 
        \begin{equation}
        R_\mathrm{eff}= \left( 1-\frac{ \pi \lambda R_\mathrm{I}^2 N_\BS N_\MS}{L_\mathrm{C}} \right)^+ \left[\frac{1}{\ln(2)} \int_{T_\mathrm{th}}^{T_\mathrm{max}} \frac{{P}^\text{Orth}_{\mathrm{c}}\left(y\right)}{y+1} dy +  \log_2\left(1+T_\text{th}\right) P^\text{Orth}_{c}\left(T_\text{th}\right)\right].
        \label{eq:Rate_SINR}
        \end{equation}
        Note that while the SINR penalty of the beam association phase is negligible, the number of required orthogonal pilots can be high resulting in large training overhead and small system resources efficiency.  
	}
\end{itemize}

\subsection{Performance with Full Pilot Reuse} \label{subsec:TP_Perf}

In this subsection, we consider the other extreme case with full control pilot reuse ($\delta_\mathrm{c} = 1$). In this case, the MS will not be able to differentiate between the signals of its serving BS and the other BSs, and will make its beam pair decisions based on the total (signal plus interference) received power. Note that the synchronization of the beam association phases among the different cells is not required in this case, as all the BSs use the same pilot, and similar interference can be caused by the control and data signals. To simplify the analysis, we will assume that the typical user and its serving BS are performing the beam training and association while the other BSs are sending data with random beamforming vectors. Next, we characterize the SINR coverage and effective rate of the mmWave cellular networks with full downlink control pilot reuse. 
\begin{itemize}
	\item{\textbf{SINR Coverage:}
	Reusing the same pilot by all the BSs in the beam association phase can cause non-negligible beam-alignment errors, which affect the SINR coverage. To analyze this, we start by understanding the $\mathsf{OBP}$ event. For the full pilot reuse beam association model, the $\mathsf{OBP}$ event occurs when the total power in the direction of the optimal beam pair---the one that maximizes the BS/MS beamforming gains---is greater than the total power in any other direction at the MS. We can, therefore, write the $\mathsf{OBP}$ event as
		\begin{equation}
		\mathsf{OBP}=\left\{P_T h_0 \rho_0  G_\BS G_\MS + I_{\Phi_{1}} > I_{\Phi_{2}}, ..., I_{\Phi_{N_\MS}}\right\}, \label{eq:OBP_TP}
		\end{equation}
		where $I_{\Phi_{m}} =  \sum_{\ell: X_\ell \in \Phi_m (\lambda), \ell \neq 0} P_T h_\ell \rho_\ell G_\MS D_\BS^\ell(n), m=1, ..., N_\MS$ is the interference power when the $m$th MS combining vector is used. Note that we assumed, without loss of generality, that the MS combining gain to the serving BS is maximized when the combining vector $m=1$ is used, i.e., $D_\MS^0(1)=G_MS$. Next, we derive upper and lower bounds for SINR coverage probability with full control pilot reuse $P_\mathrm{c}^{\text{Reused}}(T)$. 
		\begin{theorem}
			The SINR coverage probability of the mmWave cellular network described in \sref{sec:NetModel}, under the initial beam association model in \sref{sec:Assoc_Model} with full control pilot reuse, is
			\begin{equation}
			P_\mathrm{c}^\text{Reused} (T)  =P^\text{Reused}_\text{c$|$$\mathrm{L}$}\left(T\right) A_\mathrm{L} + P^\text{Reused}_\text{c$|$$\mathrm{N}$}\left(T\right) A_\mathrm{N},
			\end{equation}
			with $P^\text{Reused}_\text{c$|$$\mathrm{L}$}\left(T\right)$ and $P^\text{Reused}_\text{c$|$$\mathrm{N}$}\left(T\right)$ upper bounded by 
			\begin{equation}
			P^\text{Reused}_\text{c$|$$\mathrm{L}$}\left(T\right)\leq \sum_{n=1}^{N}  \left(-1\right)^{n+1}  \dbinom{N}{n} \int_{0}^{\infty}{\int_{0}^{\infty}{e^{-\Upsilon_{n}^1(T,g,r)-{\Upsilon }_{n}^2(T,g,r) } f_{h_0}^\mathrm{L} \left(g\right) } f_{r_0}^\mathrm{L}(r) dg dr}, \label{eq:CondTP1}
			\end{equation}
			\begin{equation}
			P^\text{Reused}_\text{c$|$$\mathrm{N}$}\left(T\right) \leq \sum_{n=1}^{N} \left(-1\right)^{n+1}  \dbinom{N}{n} \int_{0}^{\infty}{\int_{0}^{\infty}{e^{-\Upsilon^3_n(T,g,r)-\Upsilon^4_n(T,g,r)}     f_{h_0}^\mathrm{N} \left(g\right) } f_{r_0}^\mathrm{N}(r) dg dr},
			\end{equation}
			where
			\begin{align}
			& \Upsilon^1_n(T,g,x)= 2 \pi \lambda \sum_{k=1}^2 b_k \int_x^\infty {F\left(N_\mathrm{L},\frac{a n C_\mathrm{L} P_T G_k t^{- \alpha_\mathrm{L}}}{\tau_\mathrm{L}(g, x) N_\mathrm{L}} \right) t p(t) dt}, \label{eq:Up1_U}\\
			& \Upsilon^2_n(T,g,x)= 2 \pi \lambda \sum_{k=1}^2 b_k \int_{\psi_\mathrm{L}(x)}^\infty { F\left(N_\mathrm{N},\frac{a n C_\mathrm{N} P_T G_k t^{- \alpha_\mathrm{N}}}{\tau_\mathrm{L}(g, x) N_\mathrm{N}} \right) t (1-p(t)) dt }, \label{eq:Up2_U}\\
			& \Upsilon^3_n(T,g,x)= 2 \pi \lambda \sum_{k=1}^2 b_k \int_{x}^\infty { F\left(N_\mathrm{N},\frac{a n C_\mathrm{N} P_T G_k t^{- \alpha_\mathrm{N}}}{\tau_\mathrm{N}(g, x) N_\mathrm{N}}\right) t (1-p(t)) dt }, \\			
			& \Upsilon^4_n(T,g,x)= 2 \pi \lambda \sum_{k=1}^2 b_k \int_{\psi_\mathrm{N}(x)}^\infty {F\left(N_\mathrm{L}, \frac{a n C_\mathrm{L} P_T G_k t^{- \alpha_\mathrm{L}}}{\tau_\mathrm{N}(g, x) N_\mathrm{L}}\right) t p(t) dt },		
			\end{align}
			\noindent $F(N,x)=1-(1)/(1+x)^N$, and $A_\mathrm{L}, A_\mathrm{N}$ are the probabilities of the user associated to LOS and NLOS BSs. The constants $b_k=\frac{(N_\BS-2)(k-1)+1}{N_\BS}$, $G_k=G_\BS G_\MS$ for $k=1$ and $G_k=g_\BS G_\MS$ for $k=2$. The functions $f_{h_0}^\mathrm{L}(g)=f_\gamma(g,N_\mathrm{L},1/N_\mathrm{L})$ and $f_{h_0}^\mathrm{N}(g)=f_\gamma(g,N_\mathrm{N},1/N_\mathrm{N})$ are the PDFs of normalized gamma random variable modeling the LOS and NLOS channel gains, with $f_\gamma(g,N,1/N)=\frac{N^N x^{N-1} e^{-N x}}{(N-1)!}$. Finally, the functions $f_{r_0}^\mathrm{L}(x)$ and $f_{r_0}^\mathrm{N}(x)$ are the PDFs of the distance between the MS and its LOS or NLOS serving BS.
			\label{th:thm1}
		\end{theorem}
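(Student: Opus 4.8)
The plan is to reduce the full–pilot–reuse coverage probability to the (far more tractable) coverage probability under \emph{perfect} beam alignment by a monotonicity/containment argument, and then to evaluate that using the standard mmWave stochastic–geometry toolkit: conditioning on the serving-link type and distance, an Alzer-type inequality for normalized-Gamma fading, and the probability generating functional (PGFL) of the interfering-BS point process.

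First I would condition on whether the serving BS is LOS or NLOS; this produces the two weights $A_\mathrm{L},A_\mathrm{N}$ and the two conditional coverage probabilities $P^\text{Reused}_\text{c$|$$\mathrm{L}$},P^\text{Reused}_\text{c$|$$\mathrm{N}$}$, with $A_\mathrm{L},A_\mathrm{N}$ and the serving-distance densities $f_{r_0}^\mathrm{L},f_{r_0}^\mathrm{N}$ inherited from the perfect-alignment analysis (e.g.\ \cite{Bai2015}) rather than re-derived. Next I would dispose of the beam-association errors. Using the decomposition \eqref{eq:SINR_Cov}, $P_\mathrm{c}^\text{Reused}(T)=\mathbb{P}(\{\mathsf{SINR}>T\}\cap\mathsf{OBP})+\mathbb{P}(\{\mathsf{SINR}>T\}\cap\mathsf{SBP})$. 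On both $\mathsf{OBP}$ and $\mathsf{SBP}$ the MS realizes $D_\MS^0=G_\MS$, i.e.\ it uses the combining vector aimed at the serving BS, so the data-phase interference is the same $I_{\Phi_1}$ in both cases; the only difference is the serving-BS gain, which is $G_\BS$ under $\mathsf{OBP}$ and $g_\BS<G_\BS$ under $\mathsf{SBP}$. Hence the coverage event restricted to $\mathsf{SBP}$ is contained in the coverage event computed with $G_\BS$, and since $\mathsf{OBP}$ and $\mathsf{SBP}$ are disjoint,
\[
P_\mathrm{c}^\text{Reused}(T)\;\le\;\mathbb{P}\!\left(\frac{P_T h_0\,\rho_0\,G_\BS G_\MS}{I_{\Phi_1}+\sigma^2}>T\right),
\]
so the worst a beam-association error can do is reduce us to perfect alignment. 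This containment is the source of the ``$\le$'' in the statement.

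I would then bound the right-hand side exactly as in the perfect-alignment analysis. Condition on the serving distance $r_0=r$ (and carry the serving fading $h_0=g$ as an explicit variable, integrated against $f_{h_0}^\mathrm{L}$/$f_{h_0}^\mathrm{N}$ to obtain the displayed form), rewrite the event as a normalized-Gamma tail, and apply an Alzer-type inequality $\mathbb{P}(h>\xi)\le\sum_{n=1}^{N}(-1)^{n+1}\binom{N}{n}e^{-an\xi}$ with $N\in\{N_\mathrm{L},N_\mathrm{N}\}$; this is what produces the alternating binomial sum $\sum_n(-1)^{n+1}\binom{N}{n}(\cdots)$ and the factor $an$ inside the arguments of $F(N,\cdot)=1-(1+\cdot)^{-N}$. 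Each resulting term is an exponential moment of $I_{\Phi_1}$, which I would evaluate via the PGFL of the BS PPP: an interferer at range $t$ is LOS with probability $p(t)$ and NLOS with $1-p(t)$ (giving, when the serving link is LOS, the two pieces $\Upsilon^1_n$ and $\Upsilon^2_n$), its BS beam points at the typical user with probability $b_1=1/N_\BS$ (effective gain $G_\BS G_\MS=G_1$) and away with $b_2=(N_\BS-1)/N_\BS$ (gain $g_\BS G_\MS=G_2$), and averaging each interferer's own normalized-Gamma fading yields the kernel $F(N,\cdot)$. Because the user associates to the smallest-path-loss BS, a LOS interferer must lie beyond $r$ and an NLOS interferer beyond $\psi_\mathrm{L}(r)=(C_\mathrm{N}/C_\mathrm{L})^{1/\alpha_\mathrm{N}}r^{\alpha_\mathrm{L}/\alpha_\mathrm{N}}$ — the distance at which its path loss first drops below the serving one — which fixes the lower limits in $\Upsilon^1_n,\Upsilon^2_n$; the NLOS-serving case is identical with the LOS/NLOS roles and the constants $C_\mathrm{L}\!\leftrightarrow\!C_\mathrm{N}$, $\alpha_\mathrm{L}\!\leftrightarrow\!\alpha_\mathrm{N}$ interchanged, producing $\Upsilon^3_n,\Upsilon^4_n$. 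Integrating over $f_{h_0}$ and $f_{r_0}$ assembles the claimed expressions.

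The main obstacle is exactly the coupling between the beam-association events and the data-phase interference: $\mathsf{OBP}=\{P_T h_0\rho_0 G_\BS G_\MS+I_{\Phi_1}>I_{\Phi_2},\dots,I_{\Phi_{N_\MS}}\}$ is written in terms of the very PPPs and channel gains that also determine $\mathsf{SINR}$, so $\mathbb{P}(\{\mathsf{SINR}>T\}\cap\mathsf{OBP})$ cannot be factored as $\mathbb{P}(\mathsf{SINR}>T)\,\mathbb{P}(\mathsf{OBP})$; keeping the $\mathsf{OBP}$ constraint would require the joint law of $I_{\Phi_1}$ and the i.i.d.\ cross-cone interferences $I_{\Phi_2},\dots,I_{\Phi_{N_\MS}}$, i.e.\ the CDF of a PPP shot-noise, which has no usable closed form. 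The containment argument sidesteps this at the price of tightness — the bound is tight precisely when misalignment is rare (narrow beams, large $L_\mathrm{C}$), consistent with the paper's headline conclusion. Two routine-but-necessary side points remain: the independence across cones of the per-combiner processes $\Phi_m(\lambda)$ (which rests on the single-path channel and the $g_\MS=0$ assumption, so that each interferer contributes in exactly one cone), and the convergence of the range integrals defining $\Upsilon^i_n$ under $\alpha_\mathrm{L},\alpha_\mathrm{N}>2$; the noise contribution is carried through the Alzer expansion and absorbed into the definitions of $\tau_\mathrm{L},\tau_\mathrm{N}$.
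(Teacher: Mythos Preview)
Your reduction step --- bounding $P_\mathrm{c}^\text{Reused}(T)$ above by the perfect-alignment coverage $\mathbb{P}\!\left(P_T h_0\rho_0 G_\BS G_\MS/(I_{\Phi_1}+\sigma^2)>T\right)$ --- is correct and is essentially what the paper does. One simplification you miss: under the full-reuse model the paper assumes the interfering BSs hold \emph{fixed} beams while the tagged BS sweeps, so for the correct MS cone the control SNR in \eqref{eq:c_SINR} is maximized over $n$ precisely at the optimal BS beam; hence $\mathbb{P}(\mathsf{SBP})=0$ outright, and your containment argument for the $\mathsf{SBP}$ piece, while not wrong, is unnecessary.

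The real gap is in your description of the alternating-sum step, which is internally inconsistent and does not produce the formula in the theorem. You write that you will ``carry $h_0=g$ as an explicit variable, integrated against $f_{h_0}^{\mathrm L}/f_{h_0}^{\mathrm N}$'' \emph{and} ``rewrite the event as a normalized-Gamma tail and apply an Alzer-type inequality $\mathbb{P}(h>\xi)\le\sum_n(-1)^{n+1}\binom{N}{n}e^{-an\xi}$ with $N\in\{N_\mathrm{L},N_\mathrm{N}\}$''. These two moves are mutually exclusive: once $h_0=g$ is fixed, the coverage event becomes $\{I_{\Phi_1}<\tau_\mathrm{L}(g,x)\}$ and there is no Gamma tail left to bound; conversely, if you Alzer-expand $h_0$, the $f_{h_0}$ integration disappears. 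The theorem's formula keeps the explicit $f_{h_0}^{\mathrm L}(g)\,dg$ integral, so the expansion cannot be on $h_0$. What the paper actually does is condition on $h_0=g$ and then represent the \emph{interference CDF} via a dummy-Gamma identity,
\[
\mathbb{P}\!\left(I_{\Phi_1}<\tau_\mathrm{L}(g,x)\right)=\lim_{N\to\infty}\sum_{n=1}^{N}(-1)^{n+1}\binom{N}{n}\,\mathbb{E}_{\Phi_1}\!\left[e^{-an\,I_{\Phi_1}/\tau_\mathrm{L}(g,x)}\right],
\]
with $a=N(N!)^{-1/N}$ and $N$ an \emph{arbitrary large integer}, not the Nakagami parameter. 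The $N$ appearing in the theorem's outer sum is this auxiliary integer, which is why it carries no $\mathrm{L}/\mathrm{N}$ subscript (contrast Theorem~\ref{th:thm2}, where the Alzer bound \emph{is} applied to $h_0$ and the sum runs to $N_\mathrm{L}$ or $N_\mathrm{N}$ with no $g$-integral). Each term is then a Laplace transform of $I_{\Phi_1}$, which factors over independent LOS/NLOS thinnings and is evaluated by the PGFL exactly as you describe --- that part of your plan is fine.
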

		\begin{proof}
		See Appendix \ref{app:Full_Pilot_UB}.
		\end{proof}
		
		\begin{theorem}
					The SINR coverage probability of the mmWave cellular network described in \sref{sec:NetModel}, under the initial beam association model in \sref{sec:Assoc_Model} with full control pilot reuse, is
					\begin{equation}
					P_\mathrm{c}^\text{Reused} (T)  =P^\text{Reused}_\text{c$|$$\mathrm{L}$}\left(T\right) A_\mathrm{L} + P^\text{Reused}_\text{c$|$$\mathrm{N}$}\left(T\right) A_\mathrm{N},
					\end{equation}
					with $P^\text{Reused}_\text{c$|$$\mathrm{L}$}\left(T\right)$ and $P^\text{Reused}_\text{c$|$$\mathrm{N}$}\left(T\right)$ lower bounded by
					\begin{align}
					P^\text{Reused}_\text{c$|$$\mathrm{L}$}\left(T\right) & \geq \sum_{n=1}^{N_\mathrm{L}}  \left(-1\right)^{n+1} \dbinom{N_\mathrm{L}}{n}  \int_{0}^{\infty}{    e^{-\frac{n a_\mathrm{L} x^{-\alpha_\mathrm{L}} T \sigma^2}{C_\mathrm{L} G_\BS G_\MS}-\overline{\Upsilon \vphantom{(a)}}^1_n(T,x)-\overline{\Upsilon \vphantom{(a)}}^2_n(T,x)}     f_{r_0}^\mathrm{L}(x) dx}, \\
					& \hspace{20pt} \times \left[\sum_{n=1}^{N_\mathrm{L}}  \left(-1\right)^{n+1}  \dbinom{N_\mathrm{L}}{n} \int_{0}^{\infty}{    e^{\overline{\Upsilon \vphantom{(a)}}^1_n(T,x)-\overline{\Upsilon \vphantom{(a)}}^2_n(T,x)}     f_{r_0}^\mathrm{L}(x) dx}\right]^{\Nms-1}, \nonumber
					\end{align}
					and 
					\begin{align}
					P^\text{Reused}_\text{c$|$$\mathrm{N}$}\left(T\right) & \geq \sum_{n=1}^{N_\mathrm{N}}  \left(-1\right)^{n+1} \dbinom{N_\mathrm{N}}{n}  \int_{0}^{\infty}{    e^{-\frac{n a_\mathrm{N} x^{-\alpha_\mathrm{N}} T \sigma^2}{C_\mathrm{N} G_\BS G_\MS}-\overline{\Upsilon \vphantom{(a)}}^3_n(T,x)-\overline{\Upsilon \vphantom{(a)}}^4_n(T,x)}     f_{r_0}^\mathrm{N}(x) dx}, \\
					& \hspace{20pt} \times \left[\sum_{n=1}^{N_\mathrm{N}}  \left(-1\right)^{n+1}  \dbinom{N_\mathrm{N}}{n} \int_{0}^{\infty}{    e^{-\overline{\Upsilon \vphantom{(a)}}^3_n(T,x)-\overline{\Upsilon \vphantom{(a)}}^4_n(T,x)}     f_{r_0}^\mathrm{N}(x) dx}\right]^{\Nms-1}, \nonumber
					\end{align}
					where $\overline{\Upsilon \vphantom{(a)}}^1_n(T,x)$ and $\overline{\Upsilon}^2_n(T,x)$ are similar to ${\Upsilon}^1_n(T,x)$ and ${\Upsilon}^2_n(T,x)$ in \eqref{eq:Up1_U}-\eqref{eq:Up2_U} but with the difference that  $\tau_\mathrm{L}(g,x)$ is replaced by $\frac{P_T G_\BS G_\MS C_L x^{-\alpha_L}}{T}$, and $a$ replaced by $a_L=N_L (N_L!)^{-\frac{1}{N_L}}$. A similar note applies for $\overline{\Upsilon \vphantom{(a)}}^3_n(T,x)$ and $\overline{\Upsilon \vphantom{(a)}}^4_n(T,x)$.
					\label{th:thm2}
		\end{theorem}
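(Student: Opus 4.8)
The plan is to obtain the lower bound by discarding the $\mathsf{SBP}$ contribution, relaxing the $\mathsf{OBP}$ event to a form that decouples across the sectored interference fields $\Phi_1,\dots,\Phi_{\Nms}$, and then evaluating the resulting conditional probabilities with Alzer's inequality and the PGFL of the PPP. Since the serving BS is LOS with probability $A_\mathrm{L}$ and NLOS with probability $A_\mathrm{N}$ (the standard association split of \cite{Bai2015}), it is enough to lower-bound $P^\text{Reused}_\text{c$|$$\mathrm{L}$}\left(T\right)$ and $P^\text{Reused}_\text{c$|$$\mathrm{N}$}\left(T\right)$ separately; I describe the LOS case, the NLOS case being identical under the relabelling $(C_\mathrm{L},\alpha_\mathrm{L},N_\mathrm{L},\psi_\mathrm{L})\leftrightarrow(C_\mathrm{N},\alpha_\mathrm{N},N_\mathrm{N},\psi_\mathrm{N})$, which turns $\overline{\Upsilon}^1_n,\overline{\Upsilon}^2_n$ into $\overline{\Upsilon}^3_n,\overline{\Upsilon}^4_n$. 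Conditioning on the serving link being LOS, \eqref{eq:SINR_Cov} gives $P^\text{Reused}_\text{c$|$$\mathrm{L}$}\left(T\right)=\mathbb{P}(\mathsf{SINR}>T,\mathsf{OBP}\mid\mathrm{L})+\mathbb{P}(\mathsf{SINR}>T,\mathsf{SBP}\mid\mathrm{L})\ge\mathbb{P}(\mathsf{SINR}>T,\mathsf{OBP}\mid\mathrm{L})$. On $\mathsf{OBP}$ the serving beamforming gain is $G_\BS G_\MS$, so, writing $S=P_T h_0\rho_0 G_\BS G_\MS$ and using \eqref{eq:OBP_TP}, the joint event equals $\{S>T(I_{\Phi_1}+\sigma^2)\}\cap\bigcap_{m=2}^{\Nms}\{S+I_{\Phi_1}>I_{\Phi_m}\}$. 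For coverage thresholds $T\ge 1$ one has $\{S>T\,I_{\Phi_m}\}\subseteq\{S+I_{\Phi_1}>I_{\Phi_m}\}$ (because $I_{\Phi_1}\ge 0$), so this event contains $\{S>T(I_{\Phi_1}+\sigma^2)\}\cap\bigcap_{m=2}^{\Nms}\{S>T\,I_{\Phi_m}\}$, which provides the first inequality of the proof.

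I would then condition on the pair $(h_0,r_0)$ and use the fact, established in \sref{sec:Assoc_Model} under the single-path and $g_\MS=0$ assumptions, that once the type and distance of the serving BS are fixed the fields $\Phi_1,\dots,\Phi_{\Nms}$ are independent, each carrying the path-loss exclusion region forced by the cell-association rule (for a LOS serving BS at distance $x$: LOS interferers at distance $>x$, NLOS interferers at distance $>\psi_\mathrm{L}(x)$); by Slivnyak's theorem $\Phi_1$ with the serving BS deleted has the same law as each $\Phi_m$, $m\ge 2$. Hence the conditional probability of the relaxed event factorizes into $\mathbb{P}(S>T(I_{\Phi_1}+\sigma^2)\mid h_0,r_0)\cdot\mathbb{P}(S>T\,I_\Phi\mid h_0,r_0)^{\Nms-1}$ with $I_\Phi$ a generic copy of the interference, and each factor is a shot-noise CDF, nondecreasing in $h_0$ and, modulo the exclusion region, nonincreasing in $x$. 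A correlation (FKG / Chebyshev-sum) inequality in the independent coordinates $(h_0,r_0)$, followed by Jensen's inequality for $t\mapsto t^{\Nms-1}$, then lower-bounds $\mathbb{P}(\mathsf{SINR}>T,\mathsf{OBP}\mid\mathrm{L})$ by $\mathbb{E}[\mathbb{P}(\mathsf{SINR}>T\mid h_0,r_0)]\cdot\mathbb{E}[\mathbb{P}(S>T\,I_\Phi\mid h_0,r_0)]^{\Nms-1}$. It remains to evaluate the two expectations in closed form: integrate out $h_0$ with Alzer's bound $\mathbb{P}(h_0>w)\ge 1-(1-e^{-a_\mathrm{L}w})^{N_\mathrm{L}}=\sum_{n=1}^{N_\mathrm{L}}(-1)^{n+1}\binom{N_\mathrm{L}}{n}e^{-na_\mathrm{L}w}$ with $a_\mathrm{L}=N_\mathrm{L}(N_\mathrm{L}!)^{-1/N_\mathrm{L}}$ (valid since $h_0$ is a normalized Gamma variable), and integrate out each interference field with the PGFL, an interfering BS contributing beam gain $G_\BS G_\MS$ with probability $b_1=1/N_\BS$ and $g_\BS G_\MS$ with probability $b_2=(N_\BS-1)/N_\BS$ (uniformly random array orientations), the LOS and NLOS scatterers weighted by $p(t)$ and $1-p(t)$, and the lower limits of integration set by the exclusion radii. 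Carrying this out produces $\sum_n(-1)^{n+1}\binom{N_\mathrm{L}}{n}\int_0^\infty e^{-\frac{na_\mathrm{L}x^{\alpha_\mathrm{L}}T\sigma^2}{P_TC_\mathrm{L}G_\BS G_\MS}-\overline{\Upsilon}^1_n(T,x)-\overline{\Upsilon}^2_n(T,x)}f^\mathrm{L}_{r_0}(x)\,dx$ for the serving-direction expectation — where $\overline{\Upsilon}^1_n,\overline{\Upsilon}^2_n$ are $\Upsilon^1_n,\Upsilon^2_n$ of \eqref{eq:Up1_U}--\eqref{eq:Up2_U} with $\tau_\mathrm{L}(g,x)$ replaced by $P_TG_\BS G_\MS C_\mathrm{L}x^{-\alpha_\mathrm{L}}/T$ and $a$ by $a_\mathrm{L}$ — and the same expression without the $\sigma^2$ term, $\sum_n(-1)^{n+1}\binom{N_\mathrm{L}}{n}\int_0^\infty e^{-\overline{\Upsilon}^1_n(T,x)-\overline{\Upsilon}^2_n(T,x)}f^\mathrm{L}_{r_0}(x)\,dx$, for each of the $\Nms-1$ other directions (no noise enters $\mathsf{OBP}$), raised to the power $\Nms-1$.

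The step I expect to be the main obstacle is the decoupling into a product of two separate $r_0$-integrals. The correlation inequality needs the conditional coverage probability and the conditional beat-the-serving-power probability to be comonotone in $(h_0,r_0)$; monotonicity in $h_0$ is immediate, but monotonicity in $r_0$ is delicate for the serving factor, since enlarging $r_0$ both weakens $S$ and, through the association exclusion region $t>x$, thins the interference $I_{\Phi_1}$, two competing effects. A clean way around this is to replace the non-serving fields $\Phi_m$ ($m\ge 2$) by independent, exclusion-free PPP replicas: these are stochastically larger, so $\mathbb{P}(S>T\,I_{\Phi_m})$ can only decrease and the inequality direction is preserved, while their factors become manifestly nonincreasing in $r_0$; the serving factor retains its exclusion region (hence the integrals from $x$ and $\psi_\mathrm{L}(x)$ inside $\overline{\Upsilon}^1_n,\overline{\Upsilon}^2_n$) and is handled by a direct monotonicity argument or, if needed, a slightly cruder bound. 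The remaining ingredients — the binomial expansion from Alzer, the LOS/NLOS blockage split inside each $\overline{\Upsilon}$, and the $b_k,G_k$ weights from the random beam orientations of the interferers — are routine stochastic-geometry manipulations and carry over verbatim to $P^\text{Reused}_\text{c$|$$\mathrm{N}$}\left(T\right)$ under the LOS$\leftrightarrow$NLOS relabelling.
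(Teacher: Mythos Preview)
The paper does not actually give a proof of this theorem: it states that the argument is ``similar to that in Appendix~\ref{app:Full_Pilot_UB}, and is omitted due to space limitations.'' So there is no detailed derivation to compare against, only the template of the upper-bound proof of Theorem~\ref{th:thm1}. Your plan follows that template in the natural way: discard the $\mathsf{SBP}$ contribution (in fact the paper argues $\mathbb{P}(\mathsf{SBP})=0$ under full pilot reuse, so this step is an equality, not an inequality), condition on LOS/NLOS and on the serving distance, relax $\mathsf{OBP}$ so that the events in the $N_\MS$ sectors decouple, and then pass to Laplace transforms via the PGFL --- with the twist that here Alzer's inequality is applied to the genuine normalized-Gamma variable $h_0$ (hence the finite sum up to $N_\mathrm{L}$ and the constant $a_\mathrm{L}$) rather than to the interference CDF as in Theorem~\ref{th:thm1}. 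This is almost certainly what the authors intend, and the $\overline{\Upsilon}$ exponents fall out of your computation exactly as stated.

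The one substantive difficulty you single out --- passing from the single expectation $\mathbb{E}_{r_0}\!\left[f(r_0)\,g(r_0)^{N_\MS-1}\right]$ to the product $\mathbb{E}_{r_0}[f(r_0)]\cdot\big(\mathbb{E}_{r_0}[g(r_0)]\big)^{N_\MS-1}$ that the statement displays --- is real, and your FKG-plus-Jensen route with the monotonicity caveat is the right place to focus; the paper gives no hint of how (or whether) it justifies this step. Two small remarks. First, your relaxation $\{S>T\,I_{\Phi_m}\}\subseteq\{S+I_{\Phi_1}>I_{\Phi_m}\}$ needs $T\ge 1$; the theorem does not state this hypothesis, so for $T<1$ you would have to fall back on $\{S>I_{\Phi_m}\}$, which still gives a valid lower bound but no longer reproduces the stated dependence of $\overline{\Upsilon}^i_n$ on $T$. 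Second, your proposed workaround of replacing the non-serving fields $\Phi_m$ ($m\ge 2$) by exclusion-free copies does restore monotonicity in $r_0$, but it also removes the lower limits $x$ and $\psi_\mathrm{L}(x)$ from the corresponding $\overline{\Upsilon}$ integrals; that yields a looser bound than the one asserted, so if you want to land exactly on the displayed formula the monotonicity of the exclusion-region factors in $r_0$ has to be argued directly.
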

		\begin{proof}
			The proof is similar to that in Appendix \ref{app:Full_Pilot_UB}, and is omitted due to space limitations.
		\end{proof}}
	\item{\textbf{Effective Achievable Rate:}
		The full pilot reuse approach assumes that downlink control pilots are fully reused among all BSs, i.e., no specific sequence is needed to differentiate between the cells. Hence, the number of pilots $\Pi_\mathrm{p}\left(\lambda\right)=1$, and the data transmission efficiency is
		\begin{equation}
		\eta=\left(1-\frac{\Nbs \Nms}{L_\mathrm{C}}\right)^+.
		\end{equation}
		The effective achievable rate can be calculated using \eqref{eq:eff_Rate_F}
		\begin{equation}
		R_\mathrm{eff}= \left( 1-\frac{ N_\BS N_\MS}{L_\mathrm{C}} \right)^+ \left[\frac{1}{\ln(2)} \int_{y=T_\mathrm{th}}^{T_\mathrm{max}} \frac{{P}^{\text{Reuse}}_{\mathrm{c}}\left(y\right)}{y+1} dy + \log_2\left(1+T_\text{th}\right) P_\mathrm{c}^\mathrm{Reuse}\left(T_\text{th}\right) \right],
		\end{equation}
		which can be bounded using the results in Theorem \ref{th:thm1} and Theorem \ref{th:thm2}.
		}
	\item{\textbf{Verification of Analytical Results:}
		\begin{figure} [t]
			\centerline{
				\includegraphics[width=.6\columnwidth]{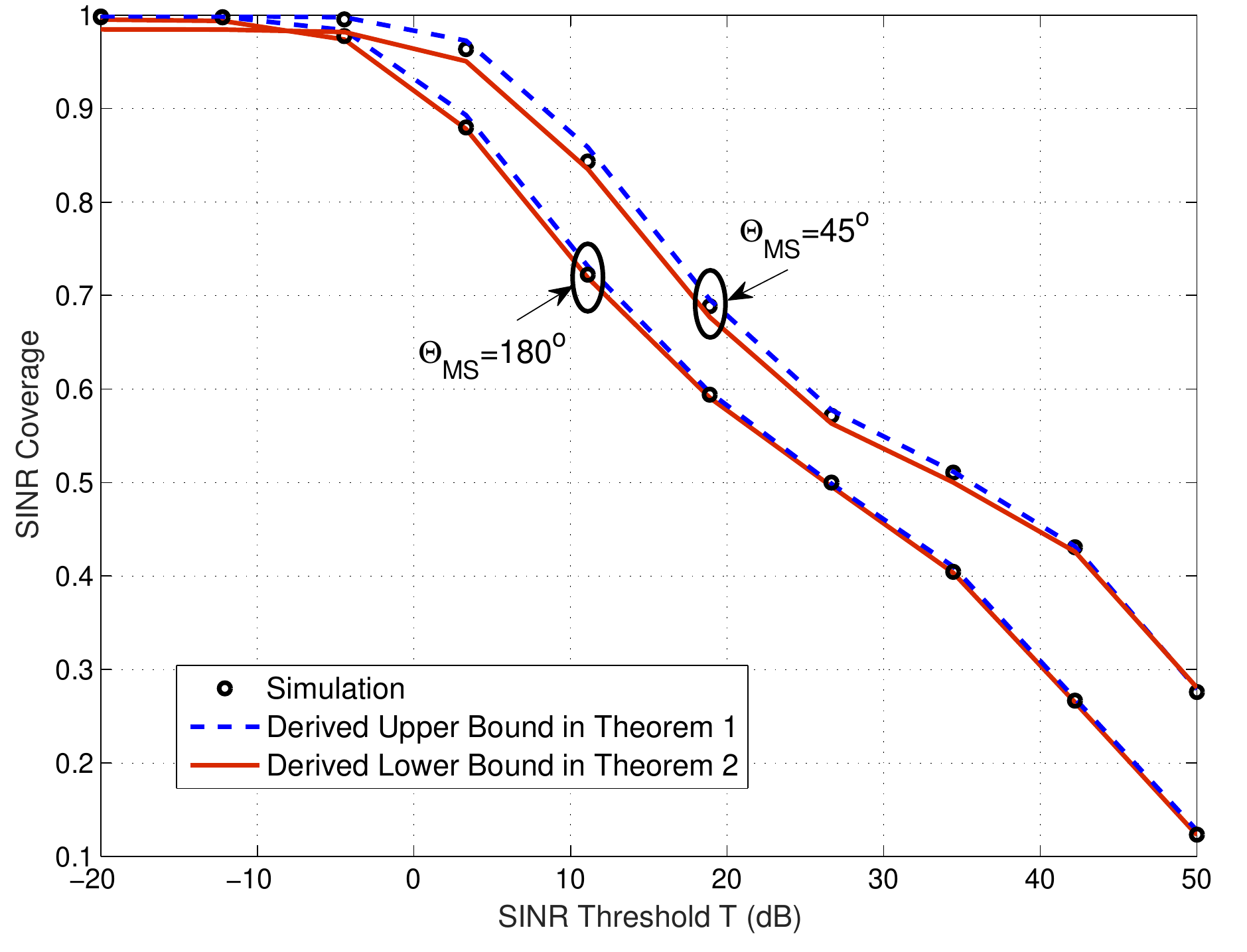}
			}
			\caption{The upper and lower bounds on the SINR coverage probability derived in Theorem \ref{th:thm1} and \ref{th:thm2} are plotted for different values of MS beamforming beamwidth $\Theta_\MS=180^\circ, \Theta_\MS=45^\circ$, and compared with the simulation results. The BSs use beamforming vectors of beamwidth $\Theta_\BS=6^\circ$, and are distributed as a PPP with a density $\lambda=\frac{1}{\pi R_c^2}$,  $R_c=50$m.  }
			\label{fig:Analysis}
		\end{figure}
		To validate the accuracy of the derived bounds in Theorems \ref{th:thm1}-\ref{th:thm2}, and examine their tightness, we compare the SINR coverage of the derived expressions with that given by numerical simulations in \figref{fig:Analysis}. In this figure, we adopt the mmWave cellular system model described in \sref{sec:NetModel} with the beam training/association model in \sref{sec:Assoc_Model}, assuming a network operating at a carrier frequency $f_c=28$ GHz, and with a bandwidth $100$ MHz. The BSs are distributed as a PPP with a density $\lambda=\frac{1}{\pi R_c^2}$, and an average cell radius of $R_c=50$ m. The LOS/NLOS path-loss exponents are $\alpha_\mathrm{L}=2.5, \alpha_\mathrm{N}=4.5$, and the LOS/NLOS Nakagami fading parameters are $N_\mathrm{L}=2, N_\mathrm{N}=3$. BSs are assumed to transmit with an average power of $P_T=43$ dBm, and to employ a codebook of sectored beamforming vectors as described in \sref{subsec:Ant_Model} with $N_\BS=64$, i.e., the sectored patterns have beamwidth $\Theta_\BS \approx 6^\circ$. Two codebooks with different beamwidths are assumed for the typical user, namely $N_\MS=8 \left(\Theta_\MS=45^\circ\right)$	and $N_\MS=2 \left(\Theta_\MS=180^\circ\right)$. Figure \figref{fig:Analysis} shows that the derived upper and lower bounds are tight for different beamwidth values of the MS combining patterns, which indicates that they both can be used as good approximations for the SINR coverage probability.
		}
\end{itemize}

\section{Discussion: Initial Beam Association Impact and Design Insights} \label{sec:Results}
In this section, we evaluate the performance of mmWave cellular systems with beam association overhead. We also provide some insights into how this beam association phase can be designed to achieve good system performance. First, we investigate the impact of the initial beam training/association phase on mmWave cellular systems in \sref{subsec:Impact}, and attempt to draw conclusions on how dense the BSs density and how narrow the BS/MS beams should be. Then, we consider the control pilot reuse factor in \sref{subsec:Pilots_Sim}, and explore how it can be optimized to achieve better effective rates for mmWave networks. Finally, we compare the performance of exhaustive search versus hierarchical search for the beam training phase in \sref{subsec:Hierarchical}, and show when hierarchical search can be beneficial and when it may not be needed.

The simulation results in this section adopt the system and beam association models described in \sref{sec:NetModel}-\sref{sec:Assoc_Model}. The system carrier frequency and bandwidth are $f_c=28$ GHz, $BW=100$ MHz, the LOS/NLOS propagation and fading parameters are $\alpha_\mathrm{L}=2.5, \alpha_\mathrm{N}=4.5, N_\mathrm{L}=2, N_\mathrm{N}=3$, the SINR  with maximum supportable MCS is $T_\mathrm{max}=\infty$, and the MCS SINR threshold for reliable transmission is $T_\mathrm{th}=0$ dB. The density of the co-pilot BSs is chosen according to the definition in \sref{subsec:SINR_Perf} such that $\epsilon_1=\epsilon_2=0.01$. The adopted antenna model has the sectored pattern explained in \sref{subsec:Ant_Model}. For the sake of accurate results and insights, we use the following expressions for the gain values $G_\BS, g_\BS$ in terms of the beamwidth $\Theta_\BS$ and the front-to-back power ratio $\gamma$:  
\begin{equation}
D^\ell_\BS\left(n\right)=\begin{cases} G_\BS= \frac{2 \pi}{\Theta_\BS} \frac{\gamma}{\gamma+1}, 
& \text{In-sector} \\ g_\BS=\frac{2 \pi}{2 \pi - \Theta_\BS} \frac{1}{\gamma+1},
& \text{Out-of-sector}
\end{cases}
\end{equation}
We further assume that the forward-to-backward power ratio $\gamma$ is given by $\gamma=\frac{2 \pi}{C_0 (2 \pi - \Theta_\BS)}$ for some constant $C_0$. This model is developed to achieve three important conditions: (1) the power conservation constraint which requires the total transmitted power from the antenna pattern to be constant and not a function of the beamwidth, by having $G_\BS \frac{\Theta_\BS}{2 \pi}+ g_\BS \frac{2 \pi - \Theta_\BS}{2 \pi}=1$, (2) the continuity condition for the gains in terms of the beamwidth, which implies that when $\Theta_\BS=1$, we should recover the omni-directional pattern with $G_\BS= 2 \pi$, and (3) the proper scaling condition which means that the main-lobe and side-lobe gains should have appropriate scaling behavior with the beamwidth. To do that, we modeled the front-to-back power ratio $\gamma$ to mimic the gain-beamwidth scaling behavior of the uniform linear arrays (ULA's) with beamsteering-based beamforming \cite{VanTrees2002,Balanis2005}.

\subsection{Impact of Beam Association on Millimeter Wave Cellular Performance} \label{subsec:Impact}
 Prior work on mmWave cellular system performance focused on studying the coverage and rate of mmWave networks assuming perfect beam alignment and putting no constraints on the minimum SINR for reliable MCS \cite{Bai2015,Singh2015}. The work in \cite{Bai2015,Singh2015,Bai2014}, and \cite{Pi2011} among others, suggested that two important features of mmWave cellular networks will be the deployment of dense BSs and the employment of narrow beams at both the BSs and mobile users. With beam association and under reliability SINR constraints, however, these features may be questionable. It is, therefore, important to determine how the beam training/association phase and the reliability SINR constraints impact the key features of mmWave cellular systems, namely, the deployment of dense BSs and the employment of narrow beams.

\subsubsection{How Dense Should MmWave Networks Be?} 
\begin{figure}
	\centerline{
		\subfigure[][{LOS range $\mu=50$m}]{
			\includegraphics[width=.55\columnwidth]{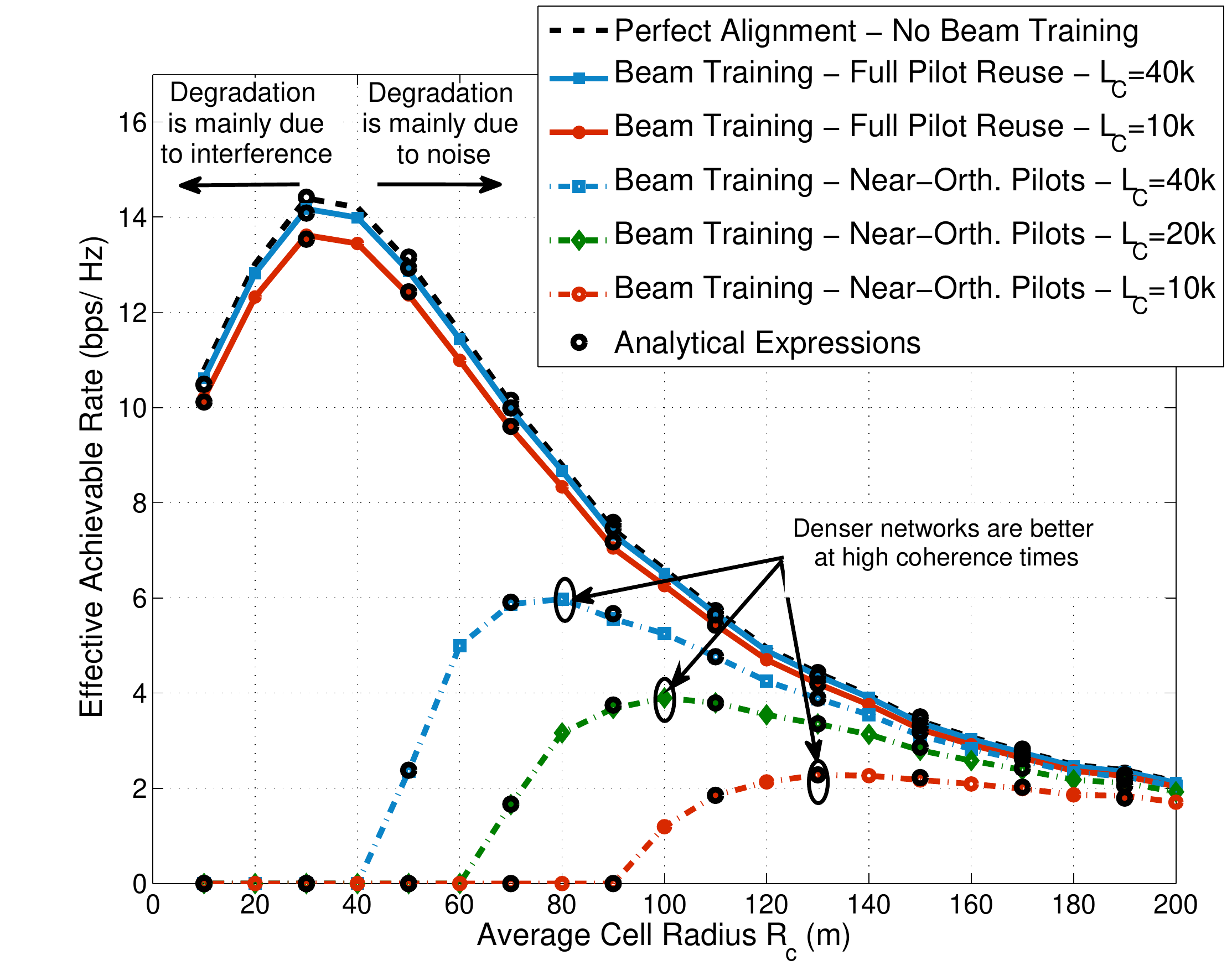}
			\label{fig:Den_50}	
		}
		\subfigure[][{LOS range $\mu=150$m}]{
			\includegraphics[width=.55\columnwidth]{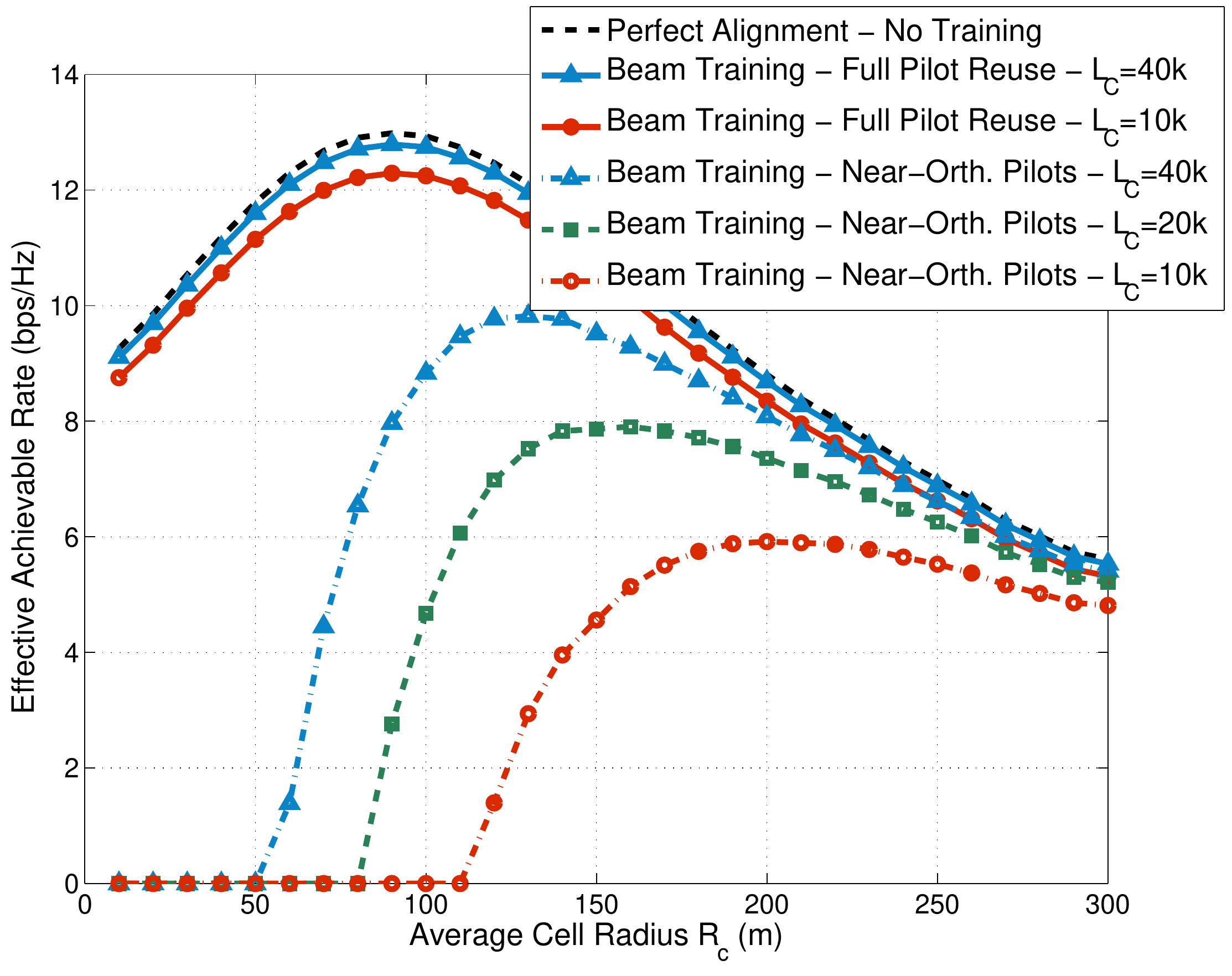}
			\label{fig:Den_150}
		}}
		\caption{The effective achievable rates of mmWave cellular networks under different beam association models are compared for different values of coherence block length with LOS range $\mu=50$m in (a), and $\mu=150$m in (b). The BSs and MSs use beamforming/combining vectors of beamwidth, $\Theta_\BS=6^\circ$ and $\Theta_\MS=45^\circ$.}
		\label{fig:Den}
	\end{figure}

\begin{figure} [t]
	\centerline{
		\includegraphics[width=.6\columnwidth]{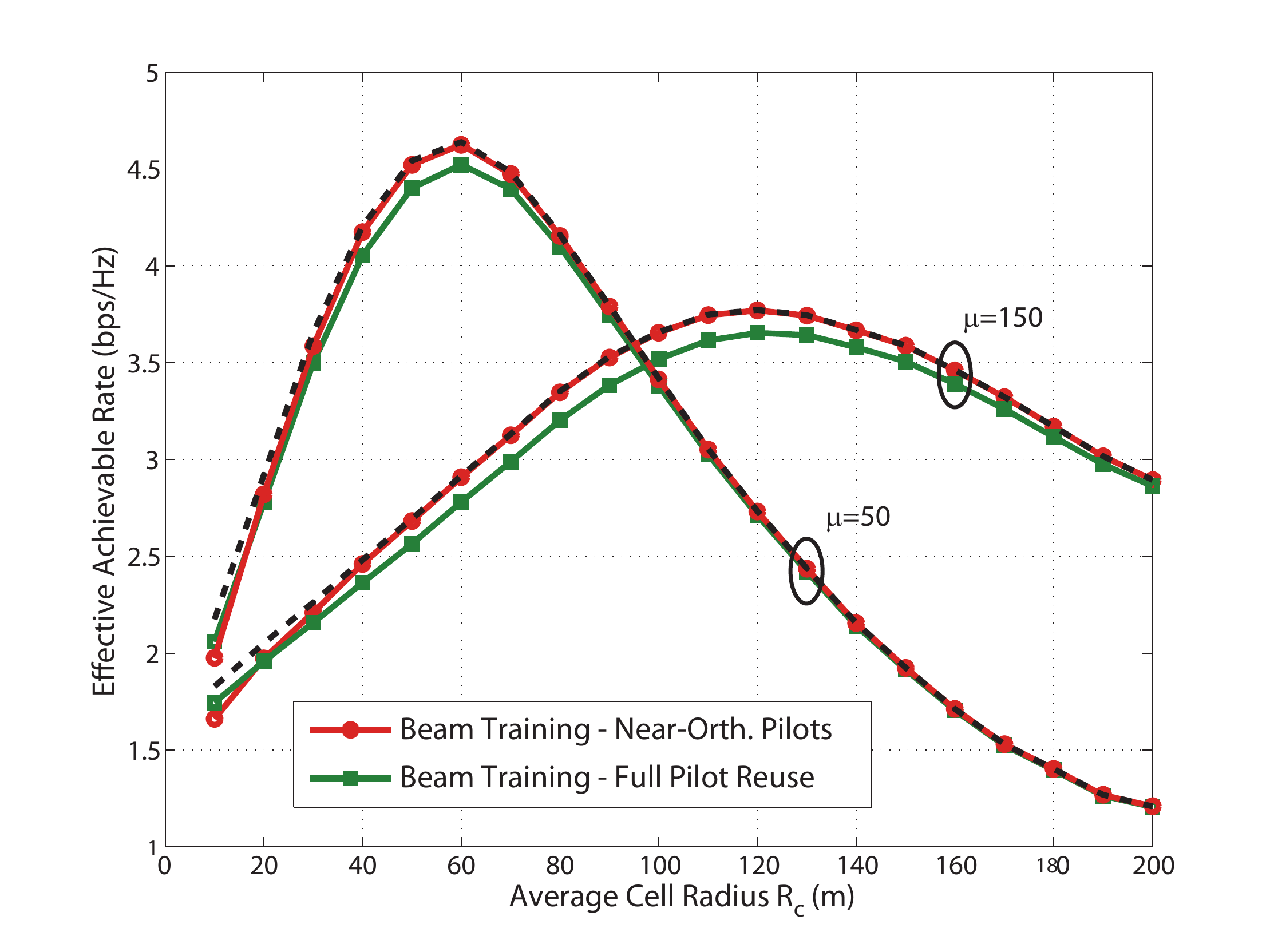}
	}
	\caption{The effective achievable rates of mmWave cellular networks under different beam association models are compared for different values of LOS ranges, with a coherence block length  of $L_C=70$k symbols. The BSs and MSs use beamforming/combining vectors of beamwidth $\Theta_\BS=45^\circ, \Theta_\MS=180^\circ$, respectively. }
	\label{fig:Den_Wide}
\end{figure}

We plot the effective reliable rate of mmWave cellular networks, defined in \eqref{eq:eff_Rate_F}, versus the average cell radius $R_c=\frac{1}{\sqrt{\pi\lambda}}$ for different scenarios in \figref{fig:Den} and \figref{fig:Den_Wide}. In \figref{fig:Den_50}, we consider mmWave networks with BSs and MSs employing narrow beams $\Theta_\BS = 6^\circ, \Theta_\MS = 45^\circ$, and with a small LOS range $\mu=50$m. First, we note that the effective reliable rate for this system with perfect beam alignment has an optimum average cell radius of about $30$m (or a BS density of $\lambda \approx 350$ BS/km$^2$). With larger cell radius, the effective rate degrades mainly because of the noise power compared to the signal power. The performance also degrades when the cell radius becomes very small mainly because of the LOS interference. As for the beam training/association impact, we compare the performance with near-orthogonal downlink control pilots and full pilot reuse models for different coherence block lengths. Results show that the performance with near-orthogonal control pilots suggests larger average cell radius (more sparse networks), especially at low coherence block lengths. This is mainly because it requires higher training overhead at dense networks as indicated in \eqref{eq:Rate_SINR}. Results also illustrate that much better performance can be achieved with full pilot reuse, which assumes $\delta_c=1$. This implies that the impact of out-of-cell interference is relatively small, and it yields a small SINR penalty, in this system model with narrow BS and MS beams. Similar performance behavior is experienced in \figref{fig:Den_150} when a larger LOS range $\mu=150$m is assumed. Results, though, show that the optimal average cell radius with perfect beam alignment is shifted towards more sparse networks. This is a result of the larger LOS range $\mu$, which causes more interference links to be LOS with dense networks, which in turns shifted the interference and noise limited regimes towards more sparse networks.

Finally, we plot the performance with wider BS and MS beams, $\Theta_\BS=45^\circ, \Theta_\MS=180^\circ$, in \figref{fig:Den_Wide} for both LOS ranges $\mu=50, \mu=150$, and considering a coherence block length $L_\mathrm{C}=70$k symbols. The beam association model with near-orthogonal pilots yields a better effective rates in this case, compared to that with full pilot reuse, and achieves a very close performance to the perfect beam alignment case. The performance difference between the two beam association models, though, is relatively small. These results show that wider beams have higher impact on the beam association phase in terms of the SINR penalty. These wide beams, however, require less training overhead, which makes the performance of beam association with near-orthogonal pilots very close to the perfect beam alignment case.

The results in \figref{fig:Den} and \figref{fig:Den_Wide} suggest that the average cell radius of mmWave cellular networks should be around the LOS range $\mu$. Further, these results indicate that the average cell radius, which maximizes the effective reliable rate, with beam association overhead can be made close to that with perfect beam alignment if proper beam association strategy is adopted. This selection  depends on many parameters including the beamwidth of the employed beamforming/combining beams at the BSs and mobile users.
\subsubsection{How Narrow Should the Beams Be?}
\begin{figure}
	\centerline{
		\subfigure[][{MS beamwidth $\Theta_\MS=45^\circ$}]{
			\includegraphics[width=.515\columnwidth]{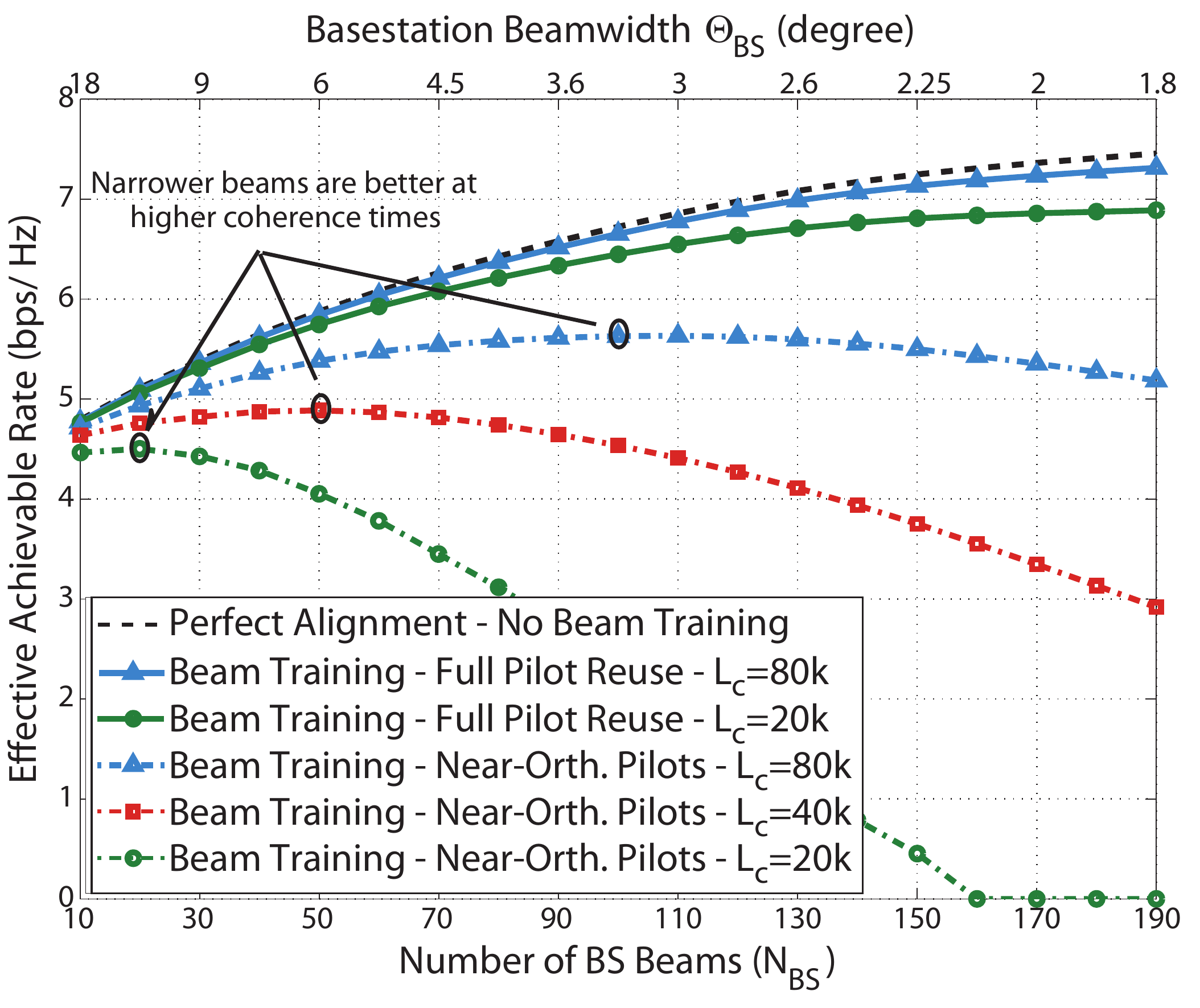}
			\label{fig:BW_8}	
		}
		\subfigure[][{MS beamwidth $\Theta_\MS=180^\circ$}]{
			\includegraphics[width=.49\columnwidth]{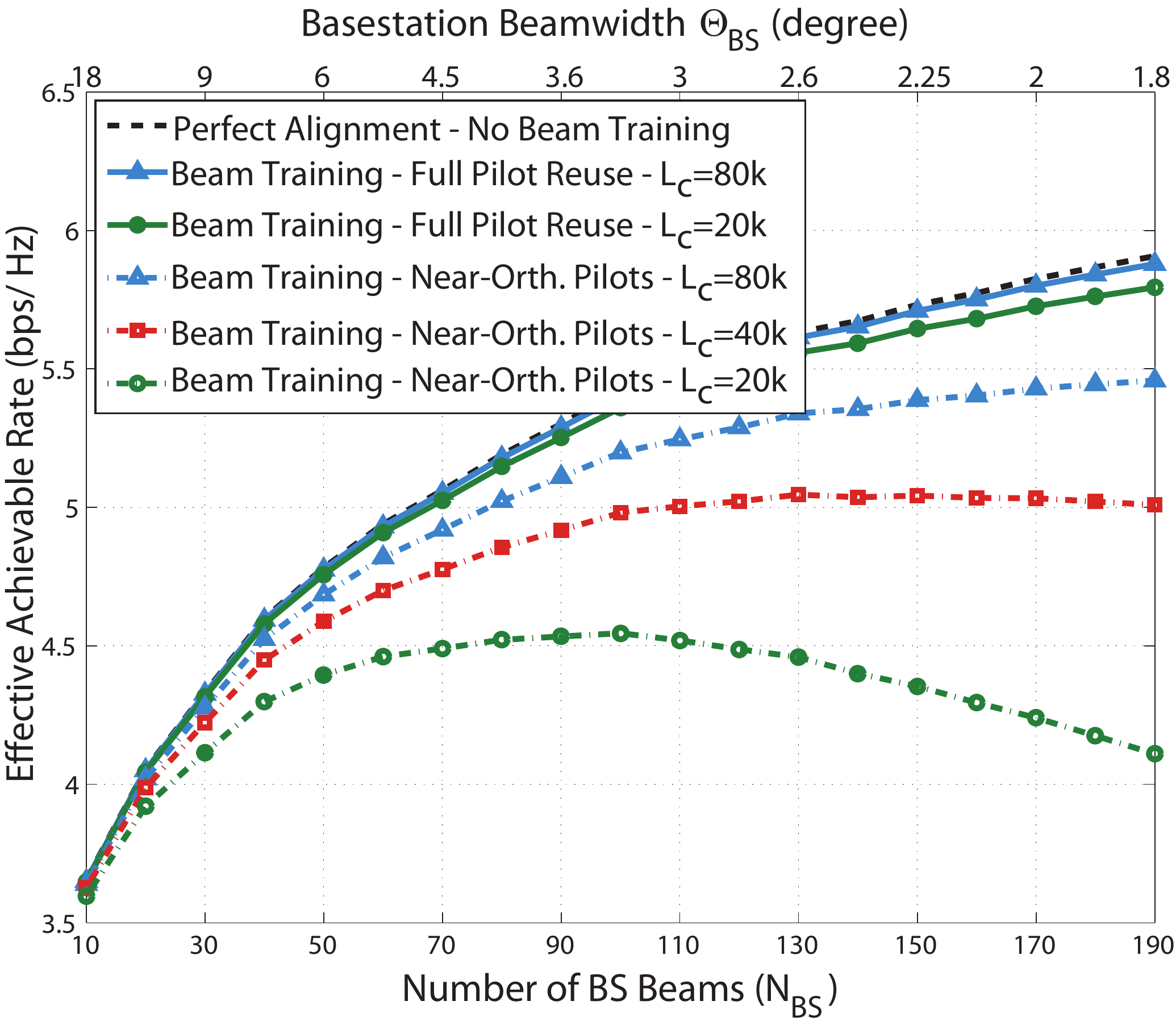}
			\label{fig:BW_2}
		}}
		\caption{The effective achievable rates of mmWave cellular networks under different beam association models are compared for different values of coherence block length with LOS range $\mu=50$m.} 
		\label{fig:BW}
	\end{figure}
To get some insights into the answer of this question, we plot the effective reliable rate versus the BS beamforming beamwidth in \figref{fig:BW} assuming a network with an average cell radius $R_\mathrm{c}=100$m and a LOS range $\mu=50$m. In \figref{fig:BW_8}, the mobile users are assumed to employ combining vectors with a small beamwidth $\Theta_\MS=45^\circ$. The results show that when perfect beam alignment is assumed, very narrow BS beams $\Theta_\BS < 2^\circ$ can give high effective reliable rates. When beam association overhead is considered with near-orthogonal pilots, a wider beamwidth is better as very narrow beams require much more training overhead as given by \eqref{eq:Rate_SINR}. The effective rate with full pilot reused based beam association is better than that with near-orthogonal pilots, and approach the perfect beam alignment performance. Similar behavior is experienced in \figref{fig:BW_2} when wider beams are assumed to be deployed by the MSs $\Theta_\MS=180^\circ$. In \figref{fig:BW_2}, narrower beams may be better for  the near-orthogonal pilot model compared with \figref{fig:BW_8}, as the training overhead is reduced with wider MS beams.  

From the results in \figref{fig:BW}, we conclude that if the control pilot reuse factor is well designed, then the effective reliable rate improves with narrow beams. This is because of the possible high beamforming gains when narrow beams are employed.  It is worth noting here that our approach to determine the good beamwidth for the beamforming vectors focused on the effective reliable rate metric with the system and beam association models described in \sref{sec:NetModel}, \sref{sec:Assoc_Model}. This approach, however, did not account for other practical limitations on the antenna design or other impairments on the beamforming design that may need to be considered for a more accurate decision on the optimal beamwidth. This remains an interesting topic for future work. 
\subsection{Orthogonal or Reused Pilots?} \label{subsec:Pilots_Sim}
\begin{figure}
	\centerline{
		\subfigure[][{Different numbers of antennas}]{
			\includegraphics[height=190pt,width=.5\columnwidth]{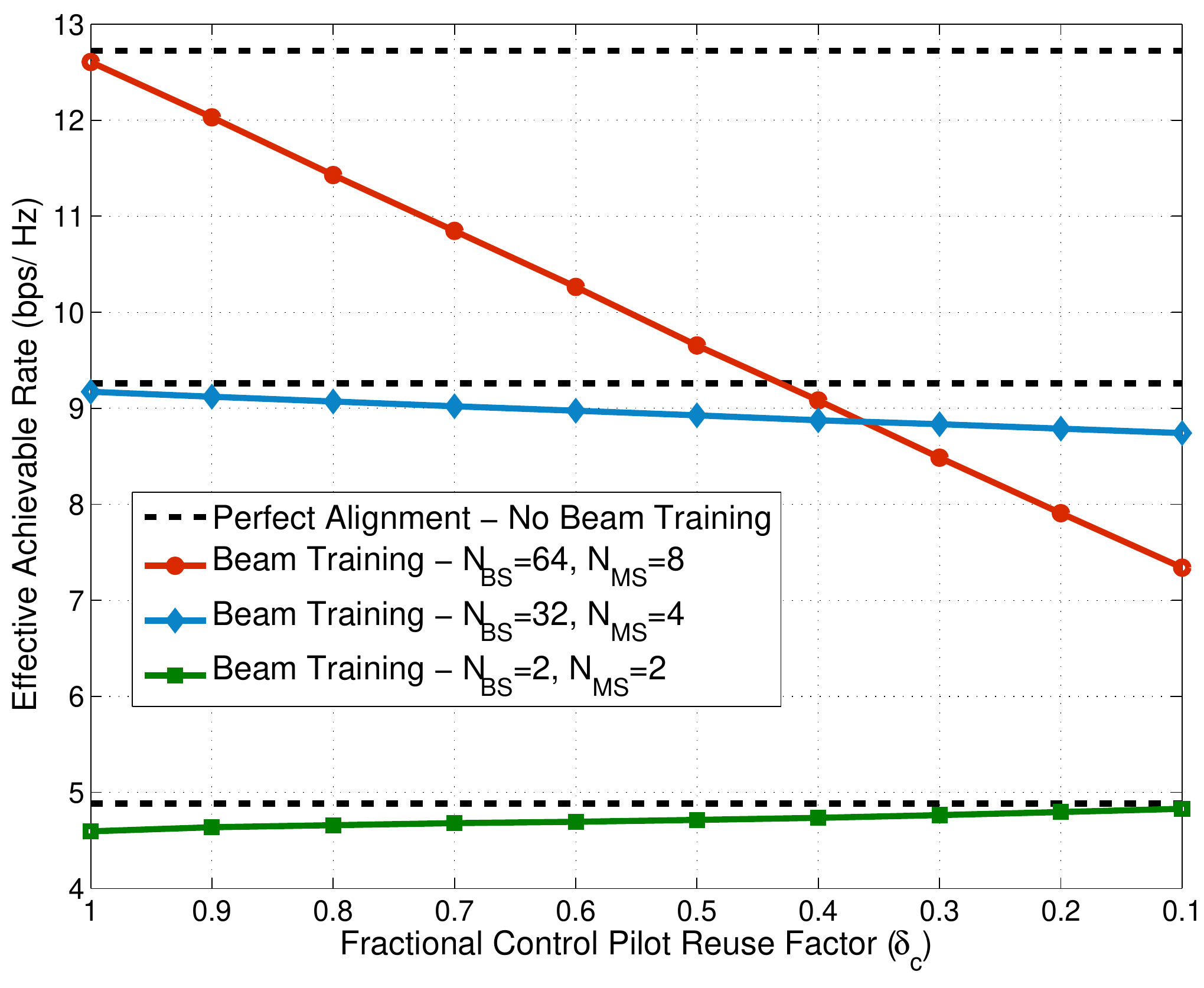}
			\label{fig:Pilots_reuse}	
		}
		\subfigure[][{Different LOS ranges}]{
			\includegraphics[height=190pt,width=.5\columnwidth]{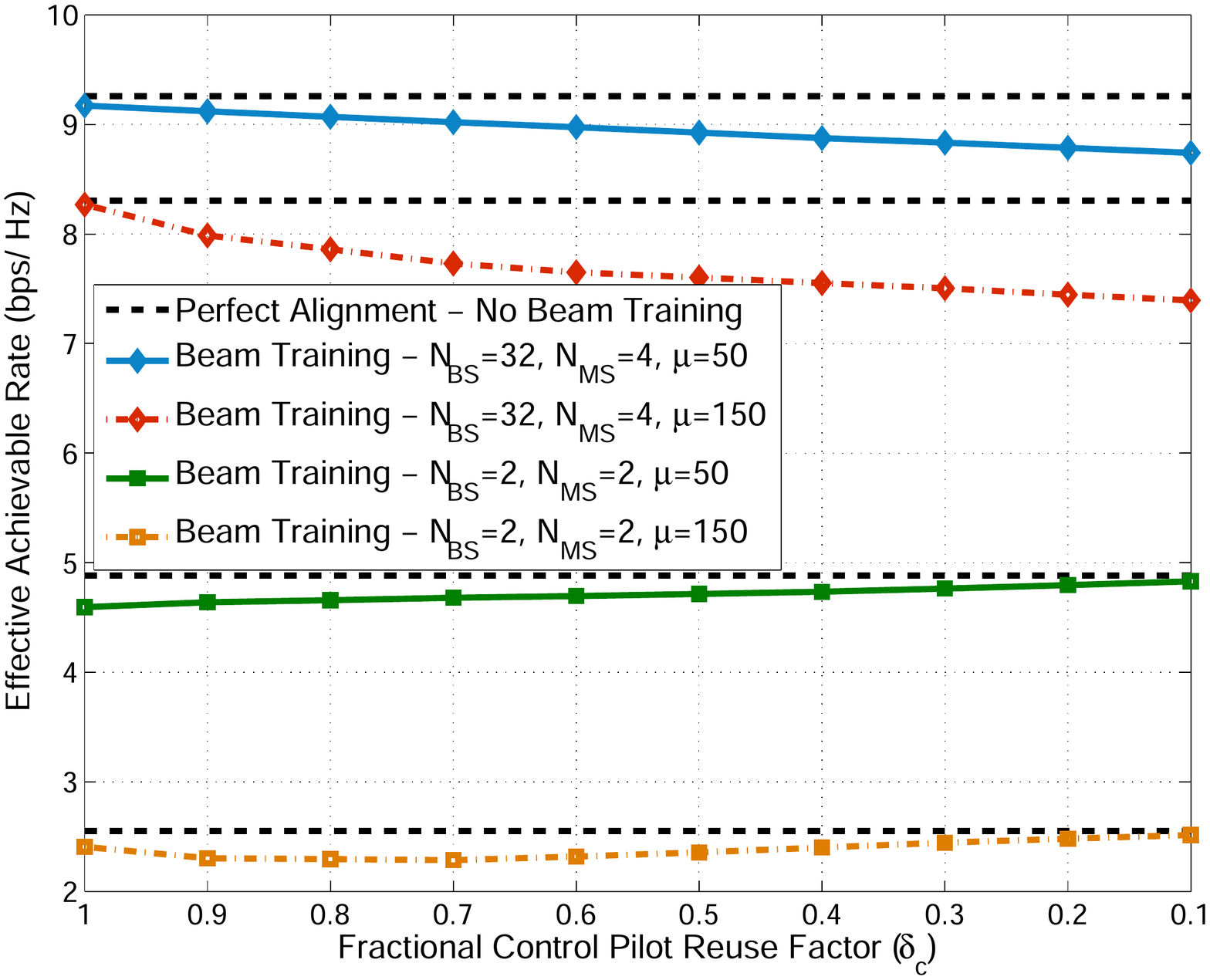}
			\label{fig:Pilots_Beta}
		}}
		\caption{The impact of pilot reuse factor on the performance of mmWave cellular networks is shown for different values of beamforming/combining codebook sizes (or equivalently beamwidths), and with an average cell radius $R_c=50$m.}
		\label{fig:Pilots}
\end{figure}
From the results in \figref{fig:Den}, \figref{fig:Den_Wide}, we noticed that each of the two adopted beam association models, namely, with near-orthogonal pilots and with full pilot reuse, is better in certain scenarios. Those two models, however, represent two extreme cases for the control pilot reuse factor, i.e., with $\delta_c=\delta_{c,\text{min}}$ and with $\delta_c=1$. To examine the impact of different pilot reuse factors, we plot the effective reliable rate in \figref{fig:Pilots} versus the control pilot reuse factor, assuming a network with an average cell radius $R_\mathrm{c}=50$m, and a coherence block length $L_\mathrm{c}=70$k symbols. In \figref{fig:Pilots_reuse}, the performance of three different setups of the BS/MS codebook sizes (or equivalently beamwidths) are compared. These results show that adopting small pilot reuse factors yields better performance when the BSs and MSs use wide-beamwidth beamforming beams, $\Theta_\BS=\Theta_\MS=180$ at $N_\BS=N_\MS= 2$.  With the other two setups which use narrower beams, full pilot reuse achieves better performance. The main reason behind this behavior is the impact of the out-of-cell interference on the beam alignment error, which increases with wide beams. Alleviating it with a small pilot reuse factor is better in this case. From \figref{fig:Pilots_reuse}, the points with maximum effective rates are mainly located in the extreme cases of full pilot reuse or very small pilot reuse factors. Similar behavior with different rate values is experienced for different values of the LOS range, $\mu$, as shown in \figref{fig:Pilots_Beta}. The results in \figref{fig:Pilots} recommend full pilot reuse factors to be adopted by mmWave networks, unless the employed data transmission beams have very wide beamwidths.
 
\subsection{Exhaustive or Hierarchical Search?} \label{subsec:Hierarchical}
\begin{figure} [t]
	\centerline{
		\includegraphics[width=.6\columnwidth]{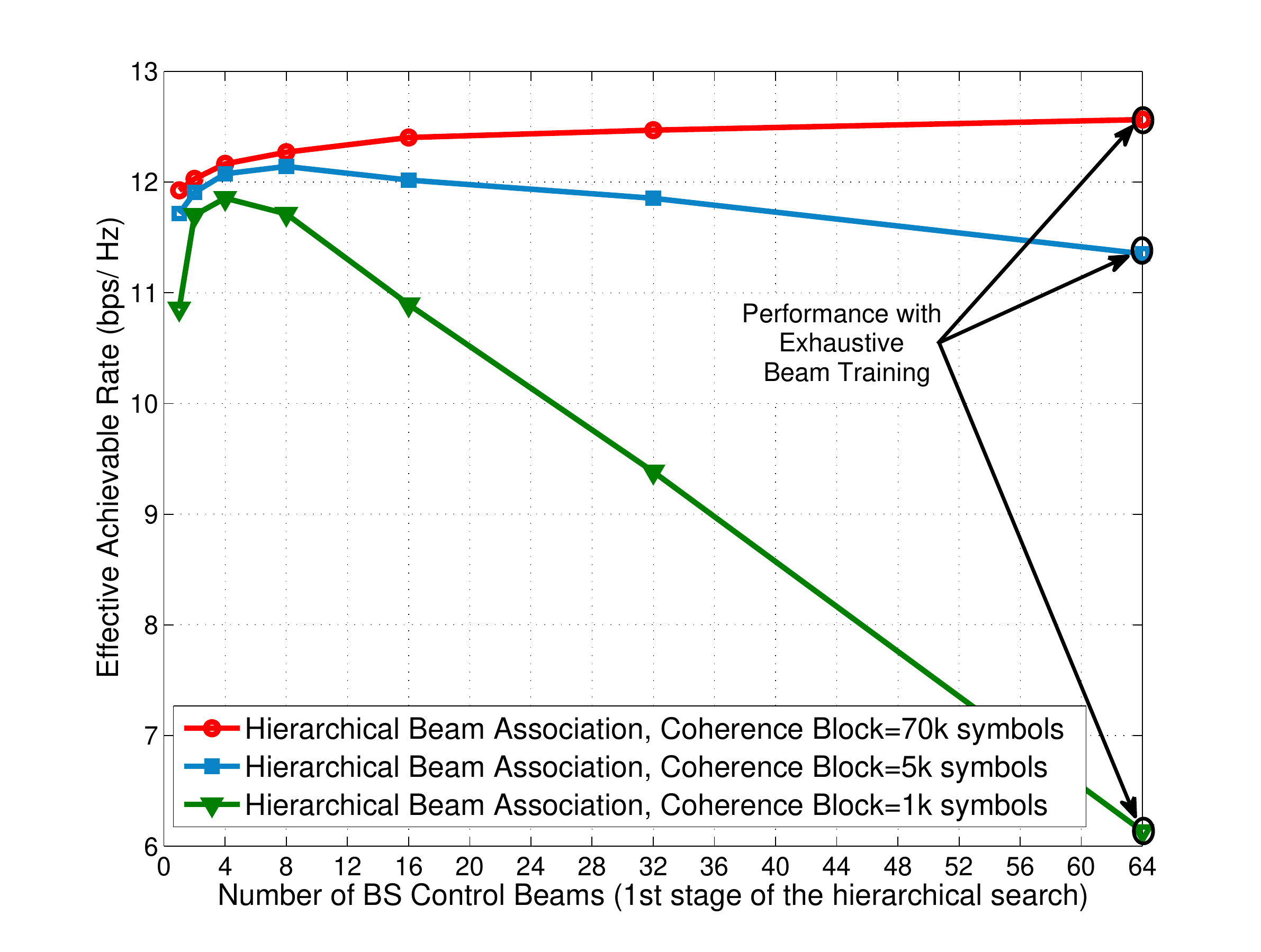}
	}
	\caption{The effective reliable rates of mmWave networks with hierarchical beam search are plotted for different coherence block lengths. The BS is assumed to first find the best control beamforming vector in a codebook with the given number of beams on the x-axis, before refining this beam to have a data transmission beam with a beamwidth $\Theta_\BS = 6^\circ$. The MS is assumed to have use combining beams with a beamwidth $\Theta_\MS=45^\circ$. The solution with $64$ control beams represent the exhaustive search solution, i.e., the direct search with the narrowest beams, as no refining is needed in this case. }
	\label{fig:Hier}
\end{figure}

In this paper, we focused on an exhaustive search for beam sweeping. This search directly results in the \textit{narrow} data transmission beam pair. An alternative to tune this beam pair is through hierarchical search \cite{Desai2014,Wang2009}, where the data beamforming vectors are adjusted via two stages: (i) a control beam search stage in which beamforming vectors with wide beamwidths are used to find the best control beamforming/combining pairs, and (ii) a data beam search stage in which the best control beamforming/combining vectors resulted from the control beam search stage are further refined to choose the best narrow beamforming/combining vectors for the data transmission \cite{Desai2014}. One main advantage of the hierarchical search is the smaller training overhead compared with exhaustive search. Considering the insights from \figref{fig:Pilots}, though, wide beams experience more training misalignment errors compared with narrow beams. This is because of the small desired beamforming gain and the large impact of out-of-cell interference when wide beams are employed. To alleviate this error, small pilot reuse factors, i.e., more pilots, may be needed in the control/wide beam search phase. This changes the overall training overhead of hierarchical search and makes it interesting to investigate its performance compared with exhaustive search. Note that exhaustive search with narrow beams normally experience small misalignment errors with full pilot reuse as shown in \figref{fig:BW}-\ref{fig:Pilots}. Next, we evaluate the performance of mmWave networks using numerical simulations, comparing the exhaustive and hierarchical search strategies.

For the hierarchical search, if $\overline{N}_\BS, \overline{N}_\MS$ represent the codebook sizes of the BS and MS beamforming vectors in the first (wide beams) search phase, with , $\overline{\delta}_c$ as the pilot reuse factor. Then, the effective reliable rate becomes
%
\begin{equation}
R_\mathrm{eff}=\left(1-\frac{\overline{N}_\BS \overline{N}_\MS \frac{1} {\overline{\delta}_c} + \frac{N_\BS}{\overline{N}_\BS} \frac{N_\MS}{\overline{N}_\MS}\frac{1}{\delta_c}}{L_\mathrm{c}}\right)^+ \bbE\left[\log_2\left(1+\max\left(\mathsf{SINR}, T_\mathrm{max}\right)\right) \mathbbm{1}_{\mathsf{SINR} \geq T_\mathrm{th}} \right].
\end{equation}

In \figref{fig:Hier}, the reliable effective rate of a mmWave network with average cell radius of $R_c=50$m and a LOS range $\mu=50$m, is plotted versus the control beams codebook size $\overline{N}_\BS$. The final BS data beams are assumed to have $N_\BS=64$, and the MS is assumed to have $N_\MS=\overline{N}_\MS=8$. Further, the pilot reuse factors are numerically optimized with $\delta_c, \overline{\delta}_c \in \left[0.1, 1\right]$. Note that the solution $\overline{N}_\BS$ represents the exhaustive search solution, as the control beams have the same beamwidth of the data beams in this case. The results in \figref{fig:Hier} show that hierarchical search achieves better effective rates when the coherence block length is relatively small. If the coherence block length is large enough, though, exhaustive search provides a similar or better performance. 
\subsection{Evaluation with Uniform Arrays and Multi-Path Channels} \label{subsec:MP}
\begin{figure} [t]
	\centerline{
		\includegraphics[width=.6\columnwidth]{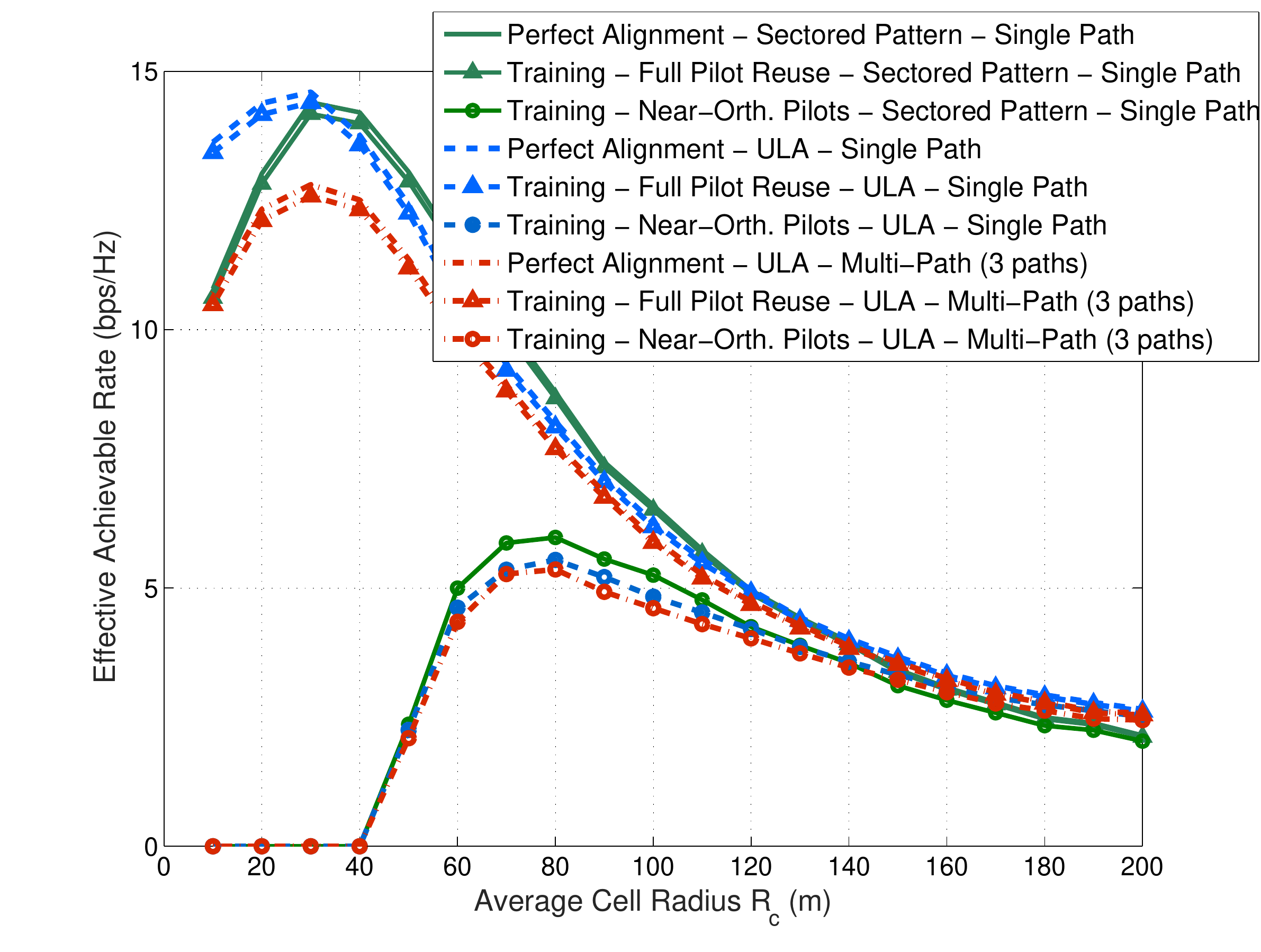}
	}
	\caption{The effective achievable rates of mmWave cellular networks for three setups of the channels and the antenna models. The figure shows that the general performance trends are similar with some scaling differences between the three models. }
	\label{fig:ULA}
\end{figure}


To gain analytical tractability, we made two main assumptions in the adopted system model in \sref{sec:NetModel}, namely, the single-path channel and zero MS side-lobe assumptions. In this subsection, we test the fragility of these assumptions by simulations. In \figref{fig:ULA}, the network model of \figref{fig:Den_50} is adopted for three different channel/antenna scenarios: (i) when the channels have single-paths and the BSs/MSs have the sectored beam patterns in \sref{sec:NetModel}, (ii) when the channels have single-paths and the BSs-MSs employ ULA's with half-wavelength, 64 antennas for the BS, 8 antennas for the MS, and (iii) when the channels between the MS and each BS consists of 3 clusters, each contributes with a single ray (one of the paths can be LOS), and the BSs/MSs antennas employ the same ULA's. \figref{fig:ULA} shows that the performance of mmWave systems with perfect beam alignment and with beam association has similar trends for the three scenarios, with some scaling differences. It also illustrates that the beam association models behave similarly in the three scenarios. This can be verified for the performance trends versus the beamforming beamwidth, and the pilot reuse factor. We  conclude that the assumptions of single-path links and sectored antennas, made for analytical tractability, are reasonable.

\section{Conclusions}\label{sec:Conclusion}
In this paper, we evaluated the performance of mmWave cellular networks with beam training/association overhead. We defined and adopted a new performance metric called effective reliable rate that captures the impact of the beam association overhead and the minimum threshold on the SINR for reliable transmission. Using stochastic geometry, we analyzed the effective rate of mmWave networks for two special yet important cases, namely, when the control pilots are nearly orthogonal and when they are fully reused. The results allow us to make the following performance insights.
\begin{itemize}
\item{\textit{How dense should mmWave networks with narrow beams be?} The effective rate of a mmWave cellular network is maximized when the average cell radius is in the order of the LOS range. Further, for the same LOS range, the optimal cell radius increases as the beamwidth of the adopted beams increases. Finally, the effective reliable rate improves as the beamwidth decreases. }

\item{\textit{What is the impact of initial beam association on network performance?} Beam training/association has resource overhead and as SINR penalty on mmWave network performance. With narrow beams, the beam training SINR penalty is low while the resource overhead is relatively high. This is reversed when wide beams are employed. Results, though, showed that the performance of mmWave networks with beam association can approach that with perfect beam alignment if the downlink control pilot reuse is wisely adjusted based on the key system parameters.}

\item{\textit{Should downlink control pilots be reused?} Interestingly, full pilot reuse can normally yield better effective reliable rates in mmWave cellular systems, unless the beamwidth of the employed beamforming/combining vectors is very wide.}

\item{\textit{Should hierarchical or exhaustive search be performed?}  The system performance with hierarchical search exceeds that with exhaustive search only when the coherence block length is relatively small. If the coherence block length is large, exhaustive search gives a better performance thanks to its less beam misalignment error probability.}
\end{itemize}
For future work, it would be of interest to include more sources of misalignment errors, such as winds and antenna impairments, into the performance analysis. It would also be interesting to consider the impact of joint cell and beam association on the system performance.


\appendices
\section{}\label{app:Full_Pilot_UB}
\begin{proof}[Proof of Theorem \ref{th:thm1}]
	The CCDF of the data transmission SINR, taking into consideration the impact of initial beam association, can be written as
\begin{small}
\begin{equation}
P_\mathrm{c}^\text{TP}(T) =\mathbb{P}\left(\left.\mathsf{SINR} > T \right| \mathsf{OBP}\right) \mathbb{P}\left(\mathsf{OBP}\right) + \mathbb{P}\left(\left.\mathsf{SINR} > T \right| \mathsf{SBP}\right) \mathbb{P}\left(\mathsf{SBP}\right).
\end{equation}
\end{small}
Now, we derive an upper bound on the SINR coverage probability conditioned on the $\mathsf{OBP}$ event, $P^\text{TP}_{\text{c}|\mathsf{OBP}}\left(T\right)=\mathbb{P}\left(\left.\mathsf{SINR} > T \right| \mathsf{OBP}\right)$. First, conditioning on the LOS and NLOS probabilities of the desired link, the conditional SINR coverage $P^\text{TP}_\text{c$|$$\mathsf{OBP}$}\left(T\right)$ can be written as
\begin{small}
\begin{equation}
P^\text{TP}_\text{c$|$$\mathsf{OBP}$}\left(T\right)=P^\text{TP}_\text{c$|$$\mathsf{OBP}$,$\mathrm{L}$}\left(T\right) A_\mathrm{L} + P^\text{TP}_\text{c$|$$\mathsf{OBP}$,$\mathrm{N}$}\left(T\right) A_\mathrm{N},
\end{equation}
\end{small}
where $A_\mathrm{L}$ and $A_\mathrm{N}$ are the probability that the MS is associated with LOS and NLOS BS, respectively, and $P^\text{TP}_\text{c$|$$\mathsf{OBP}$,$\mathrm{L}$}\left(T\right)$ and $P^\text{TP}_\text{c$|$ $\mathsf{OBP}$,$\mathrm{N}$}\left(T\right)$ are the corresponding conditional SINR coverage for the LOS and NLOS cases. The expressions of $A_\mathrm{L}, A_\mathrm{N}$ are the same as those derived in \cite[Section III-B]{Bai2015}. Next, given the adopted total power beam pair decision metric, the LOS conditional SINR coverage  $P^\text{TP}_\text{c$|$$\mathsf{OBP}$,$\mathrm{L}$}\left(T\right)$ can be written as
\begin{small}
\begin{equation}
P^\text{TP}_\text{c$|$$\mathsf{OBP}$,$\mathrm{L}$}\left(T\right)=\mathbb{P}\left(\left.\frac{P_T G_\BS G_\MS h_0^\mathrm{L} C_\mathrm{L} r_0^{-\alpha_\mathrm{L}} }{I_{\Phi_{1}}+\sigma^2} > T \right|P_T G_\BS G_\MS h_0^\mathrm{L}  C_\mathrm{L} r_0^{-\alpha_\mathrm{L}} + I_{\Phi_{1}} > I_{\Phi_{2}}, ..., I_{\Phi_{N_\MS}} \right).
\end{equation}
\end{small}
Further, using  the Slivnyak's Theorem \cite{Baccelli2009}, and given that the user is associated with a LOS BS, the conditional LOS SINR coverage can be written as
\begin{small}
\begin{equation}
 P^\text{TP}_\text{c$|$ $\mathsf{OBP}$,$\mathrm{L}$}\left(T\right) =
\int_{0}^{\infty}{\int_{0}^{\infty}{ \mathbb{P}\left( I_{\Phi_{1}} < \tau_L\left(g,x\right)\left| C_0\left(g,x\right) + \sigma^2 + I_{\Phi_{1}} > I_{\Phi_{2}}, ..., I_{\Phi_{N_\MS}} \right. \right) f_{h_0^L} \left(g\right) } f_{r_0}^L(x) dg dx}, \label{eq:Cond1}
\end{equation}
\end{small}
where $\tau_L(g,x)=P_T G_\BS G_\MS g x^{-\alpha_L} \frac{C_L}{T}-\sigma^2$, and $f_{h_0^L} \left(g\right), f_{r_0}^L(x)$ are the probability distribution functions of the channel gain $h_0^L$ and the distance to the associated BS $r_0$, respectively. We recall here that $h_0^L$ is modeled as a normalized gamma random variable with a parameter $N_L$, and its PDF function is therefore given by $f_\gamma(g,N,1/N)=\frac{N^N x^{N-1} e^{-N x}}{(N-1)!}$. Now, we note that conditional coverage probability in \eqref{eq:Cond1} can be upper bounded by
\begin{small}
\begin{equation}
P^\text{TP}_\text{c$|$ $\mathsf{OBP}$,$\mathrm{L}$}\left(T\right) \leq \int_{0}^{\infty}{\int_{0}^{\infty}{ \mathbb{P}\left(I_{\Phi_{1}} < \tau_L\left(g,x\right) \right) f_{h_0^L} \left(g\right) } f_{r_0}^L(x) dg dx}. \label{eq:Cond2}
\end{equation}
\end{small}%
Following \cite{Thornburg2015}, the CDF of the interference random variable $I_{\Phi_{1}}$ can be expressed as
\begin{small}
\begin{align}
 \mathbb{P}\left(I_{\Phi_{1}} <\tau_L\left(g,x\right) \right) &= \lim_{N \to \infty}\sum_{n=1}^{N} \dbinom{N}{n} \left(-1\right)^{n+1} \bbE_{\Phi_1}\left[e^{-\frac{a n}{\tau_L(g,x)}I_{\Phi_{1}} }\right], \\
 & \stackrel{(a)}{=} \lim_{N\to \infty} \sum_{n=1}^{N} \dbinom{N}{n} \left(-1\right)^{n+1} \bbE_{\Phi_{1}^L}\left[e^{-\frac{a n}{\tau_L(g,x)}I_{\Phi^L_{1}} }\right]
 \bbE_{\Phi^N_{1}}\left[e^{-\frac{a n}{\tau_L(g,x)}I_{\Phi^N_{1}} }\right],
\label{eq:Int1}
\end{align}
\end{small}
where $a=N (N!)^{-\frac{1}{N}}$, $N$ is an arbitrary large integer number, and (a) follows by assuming that the interfering BSs belong to independent LOS and NLOS PPP's. The Laplace transforms of the interference terms in \eqref{eq:Int1} can further be  written in closed forms as \cite{Bai2015}
\begin{small}
\begin{align}
& \bbE_{\Phi^L_{1}}\left[e^{-\frac{a n}{\tau_L(g,x)}I_{\Phi^L_1} }\right] = \exp\left(-2 \pi \lambda \sum_{k=1}^2 b_k \int_x^\infty {\left(1-\frac{1}{\left(1+ \frac{a n C_L P_t G_k t^{- \alpha_L}}{\tau_L(g, x) N_L}\right)^{N_L}}\right) t p(t) dt }\right),\\
& \bbE_{\Phi^N_1}\left[e^{-\frac{a n}{\tau_L(g,x)}I_{\Phi^N_1} }\right] = \exp\left(-2 \pi \lambda \sum_{k=1}^2 b_k \int_{\psi_L(x)}^\infty {\left(1-\frac{1}{\left(1+ \frac{a n C_N P_t G_k t^{- \alpha_N}}{\tau_L(g, x) N_N}\right)^{N_N}}\right) t (1-p(t)) dt }\right),
\end{align}
\end{small}%
where $\psi_L(x)$ is the minimum distance for a NLOS interfering BS to have the same path-loss value of the LOS serving BS at a distance $x$ from the MS, $\psi_L(x)=\left(\frac{C_N}{C_L}\right)^{\frac{1}{\alpha_N}} x^{\frac{\alpha_L}{\alpha_N}}$ \cite{Bai2015}. The conditional NLOS  SINR coverage probability can be similarly bounded to have
\begin{small}
\begin{equation}
P^\text{TP}_\text{c$|$ $\mathsf{OBP}$,$\mathrm{N}$}\left(T\right) \leq \int_{0}^{\infty}{\int_{0}^{\infty}{ \mathbb{P}\left(I_{\Phi_1} < \tau_N\left(g,x\right) \right) f_{h_0^N} \left(g\right) } f_{r_0}^N(x) dg dx}. \label{eq:Cond3}
\end{equation}
\end{small}
where $\tau_N(g,x)=P_T G_\BS G_\MS g x^{-\alpha_N} \frac{C_N}{T}-\sigma^2$, and $f_{h_0^N} \left(g\right), f_{r_0}^N(x)$ are the probability distribution functions of the NLOS channel gain $h_0^N$ and the distance to the NLOS associated BS $r_0$, respectively. Further, the conditional interference probability can be written as
\begin{small}
\begin{equation}
\mathbb{P}\left(I_{\Phi+{1}} <\tau_L\left(g,x\right) \right) = \lim_{N \to \infty} \sum_{n=1}^{N} \dbinom{N}{n} \left(-1\right)^{n+1} \bbE_{\Phi^L_1}\left[e^{-\frac{a n}{\tau_N(g,x)}I_{\Phi^L_1} }\right]
\bbE_{\Phi^N_1}\left[e^{-\frac{a n}{\tau_N(g,x)}I_{\Phi^N_1}}\right],
\label{eq:Int2}
\end{equation}
\end{small}
with the interference Laplace transforms 
\begin{small}
\begin{align}
& \bbE_{\Phi^L_1}\left[e^{-\frac{a n}{\tau_N(g,x)}I_{\Phi^L_1} }\right] =\exp\left(-2 \pi \lambda \sum_{k=1}^2 b_k \int_{\psi_N(x)}^\infty {\left(1-\frac{1}{\left(1+ \frac{a n C_L P_t G_k t^{- \alpha_L}}{\tau_N(g, x) N_L}\right)^{N_L}}\right) t p(t) dt }\right),\\
& \bbE_{\Phi^N_1}\left[e^{-\frac{a n}{\tau_N(g,x)}I_{\Phi^N_1} }\right] = \exp\left(-2 \pi \lambda \sum_{k=1}^2 b_k \int_{x}^\infty {\left(1-\frac{1}{\left(1+ \frac{a n C_N P_t G_k t^{- \alpha_N}}{\tau_N(g, x) N_N}\right)^{N_N}}\right) t (1-p(t)) dt }\right).
\end{align} 
\end{small}
Finally, note that $\mathbb{P}\left(\mathsf{SBP}\right)=0$, which results from the zero MS side lobe assumption and the assumption that other BSs are sending data (i.e., using fixed beamforming vectors) while the considered BS and MS are training their beams in the full pilot reuse case. This ensures that the beam training procedure with the control SINR definition in \eqref{eq:c_SINR} and full pilot reuse will result in a joint BS-MS beamforming gain of either $G_\mathrm{BS} G_\mathrm{MS}$ or $0$, and this completes the proof. 
\end{proof}

\linespread{1.2}
\bibliographystyle{IEEEtran}

\end{document}